\newenvironment{ienumerate}
	{\ifdefined\VersionLong\begin{enumerate}\else\begin{inparaenum}[\itshape i\upshape)]\fi}
	{\ifdefined\VersionLong\end{enumerate}\else\end{inparaenum}\fi}
\newenvironment{oneenumerate}
	{\ifdefined\VersionLong\begin{enumerate}\else\begin{inparaenum}[1)]\fi}
	{\ifdefined\VersionLong\end{enumerate}\else\end{inparaenum}\fi}
\definecolor{LSdeuxNrichblack}{HTML}{020A13}
\definecolor{LSdeuxNbazul}{HTML}{1273CE}
\definecolor{LSdeuxNcornflower}{HTML}{5297FF}
\definecolor{LSdeuxNAntiflashwhite}{HTML}{F8FBFF}
\newcommand{\fakeParagraph}[1]{\smallskip

\noindent\textbf{\textsf{#1.}}}
\newcommand{\defProblem}[3]
{%
\noindent\fcolorbox{LSdeuxNrichblack}{LSdeuxNAntiflashwhite}{
	\begin{minipage}{.95\columnwidth}
		\textbf{#1:}\\ %
		\textsc{Input}: #2\\
		\textsc{Problem}: #3
	\end{minipage}
}

	\smallskip

}
\tikzstyle{every node}     = [initial text=]
\tikzstyle{PTA}            = [auto, ->, >=stealth']
\tikzstyle{location}       = [rectangle, rounded corners, minimum size= 12pt, draw=black, fill=blue!10, inner sep=2pt]
\tikzstyle{location2CM}    = [location,thick,fill=green!30]
\tikzstyle{error}          = [fill=red!50]
\tikzstyle{sink}           = [fill=gray!50]
\tikzstyle{invariant}      = [draw=black, dotted, fill=yellow, inner sep= 1pt, node distance=0] %
\tikzstyle{invariantNorth} = [invariant, yshift=1em]
\tikzstyle{invariantSouth} = [invariant, yshift=-1em]
\tikzstyle{invariantWest}  = [invariant, xshift=-1.5em]
\tikzstyle{edge}           = [->,  >=stealth']
\tikzstyle{locguard}       = [fill=cyan!30, inner sep=1pt, node distance=0] %
\definecolor{coloract}  {rgb}{0  , 0.4, 0}
\definecolor{colorclock}{rgb}{0.7, 0  , 0}
\definecolor{colordisc} {rgb}{1  , 0  , 1}
\definecolor{colorloc}  {rgb}{0.1, 0.1, 0.1}
\definecolor{colorparam}{rgb}{1, 0.6  , 0.0}
\newcommand{\cellHeader}[1]{\cellcolor{LSdeuxNAntiflashwhite}\textbf{#1}}
\newcommand{\rowHeader}{\rowcolor{LSdeuxNAntiflashwhite}}
\newcommand{\cellOpen}{\cellcolor{yellow!20}\textbf{$?$}}
\newcommand{\cellDecidable}{\cellcolor{green!35}$\surd$}
\newcommand{\cellUndecidable}{\cellcolor{red!35}$\times$}
\newcommand{\set}[1]{\ensuremath{\left\{#1\right\}}}
\newif\ifinfigure
\let\infigure\iffalse %
\let\oldtikzpicture\tikzpicture
\let\endoldtikzpicture\endtikzpicture
\renewenvironment{tikzpicture}
  {\let\ifinfigure\iftrue\oldtikzpicture}
  {\endoldtikzpicture\let\ifinfigure\iffalse}
\newcommand{\styleAutomaton}[1]{\ensuremath{\mathcal{#1}}}
\newcommand{\styleact}[1]{\ensuremath{
	\ifinfigure%
		\textcolor{coloract}{{#1}}%
	\else%
		#1%
	\fi%
	}%
}
\newcommand{\styleclock}[1]{\ensuremath{
	\ifinfigure%
		\textcolor{colorclock}{{#1}}%
	\else%
		#1%
	\fi%
	}%
}
\newcommand{\styleloc}[1]{\ensuremath{
	\ifinfigure%
		\textcolor{colorloc}{{\mathrm{#1}}}%
	\else%
		\mathrm{#1}%
	\fi%
	}%
}
\newcommand{\styleparam}[1]{\ensuremath{
	\ifinfigure%
		\textcolor{colorparam}{{#1}}%
	\else%
		#1%
	\fi%
	}%
}
\newcommand{\init}{\ensuremath{0}}
\newcommand{\final}{\ensuremath{f}}
\newcommand{\assign}{\leftarrow}
\newcommand{\compOp}{\bowtie}
\newcommand{\decrincr}{\preceq}
\newcommand{\reset}[2]{\ensuremath{[#1]_{#2}}}
\newcommand{\valuate}[2]{\ensuremath{#2(#1)}}
\newcommand{\setN}{\ensuremath{\mathbb{N}}}
\newcommand{\setNpos}{\ensuremath{\setN_{> 0}}}
\newcommand{\setQ}{\ensuremath{\mathbb{Q}}}
\newcommand{\setQpos}{\ensuremath{\setQ_{> 0}}}
\newcommand{\setR}{\ensuremath{\mathbb{R}}}
\newcommand{\setRgeqzero}{\ensuremath{\setR_{\geq 0}}}
\newcommand{\setZ}{\ensuremath{\mathbb{Z}}}
\newcommand{\clock}{\ensuremath{\styleclock{x}}}
\newcommand{\clockx}{\ensuremath{\styleclock{x}}}
\newcommand{\clocki}[1]{\ensuremath{\styleclock{\clock_{#1}}}}
\newcommand{\ClockCard}{H} %
\newcommand{\clockval}{\ensuremath{\mu}}
\newcommand{\ClockSet}{\ensuremath{\mathbb{X}}} %
\newcommand{\ClocksZero}{\ensuremath{\vec{0}}}
\newcommand{\resets}{\ensuremath{R}}
\newcommand{\param}{\ensuremath{\styleparam{p}}}
\newcommand{\paramp}{\ensuremath{\styleparam{\param}}}
\newcommand{\parami}[1]{\ensuremath{\styleparam{\param_{#1}}}}
\newcommand{\ParamCard}{\ensuremath{M}} %
\newcommand{\pval}{\ensuremath{v}}
\newcommand{\ParamSet}{\ensuremath{\mathbb{P}}} %
\newcommand{\ParamSetL}{\ensuremath{\ParamSet_L}}
\newcommand{\ParamSetU}{\ensuremath{\ParamSet_U}}
\newcommand{\pvalzerod}[1]{\ensuremath{\pval_{0/#1}}}
\newcommand{\pvalzeroinf}{\ensuremath{\pvalzerod{\infty}}}
\newcommand{\CTrue}{\ensuremath{\mathit{True}}}
\newcommand{\ConstraintsXP}{\ensuremath{\Phi(\ClockSet \cup \ParamSet)}}
\newcommand{\calM}{\ensuremath{\mathcal{M}}}
\newcommand{\countervalue}{\ensuremath{c}}
\newcommand{\Counter}{\ensuremath{\mathcal{C}}}
\newcommand{\maxcounterval}{\ensuremath{c_\mathit{max}}}
\newcommand{\counterGenericIndex}{\ensuremath{l}}
\newcommand{\cmshaltname}{\ensuremath{\textrm{halt}}} %
\newcommand{\cms}{\ensuremath{\mathtt{q}}} %
\newcommand{\cmshalt}{\ensuremath{\cms_{\cmshaltname}}} %
\newcommand{\cmspta}{\ensuremath{q}} %
\newcommand{\clockCMt}{\ensuremath{\styleclock{t}}}
\newcommand{\clockCMcone}{\ensuremath{\styleclock{\clock_1}}}
\newcommand{\clockCMctwo}{\ensuremath{\styleclock{\clock_2}}}
\newcommand{\TA}{\ensuremath{{A}}}
\newcommand{\PTA}{\ensuremath{\styleAutomaton{A}}}
\newcommand{\action}{\ensuremath{\styleact{a}}}
\newcommand{\ActionSet}{\ensuremath{\Sigma}}
\newcommand{\ADRaction}{\ensuremath{\styleact{\sigma}}}
\newcommand{\ADRactioni}[1]{\ensuremath{\styleact{\ADRaction_{#1}}}}
\newcommand{\constraint}{\ensuremath{C}}
\newcommand{\EdgeSet}{\ensuremath{E}}
\newcommand{\guard}{\ensuremath{g}}
\newcommand{\invariant}{\ensuremath{I}}
\newcommand{\loc}{\ensuremath{\styleloc{\ell}}}
\newcommand{\lochalt}{\ensuremath{\cmspta_{\cmshaltname}}}
\newcommand{\loci}[1]{\ensuremath{\styleloc{\loc}_{#1}}}
\newcommand{\locij}[2]{\ensuremath{\styleloc{\loc}_{#1}^{#2}}}
\newcommand{\locinit}{\ensuremath{\styleloc{\loc_\init}}}
\newcommand{\locfinal}{\ensuremath{\styleloc{\loc_\final}}}
\newcommand{\LocSet}{\ensuremath{L}} %
\newcommand{\locguardof}[1]{[#1]}
\newcommand{\locguard}{\ensuremath{\gamma}}
\newcommand{\locinitexample}{\ensuremath{\mathsf{init}}}
\newcommand{\loclisten}{\ensuremath{\mathsf{listen}}}
\newcommand{\locpost}{\ensuremath{\mathsf{post}}}
\newcommand{\locreading}{\ensuremath{\mathsf{reading}}}
\newcommand{\locdone}{\ensuremath{\mathsf{done}}}
\newcommand{\locerror}{\ensuremath{\mathsf{error}}}
\newcommand{\longuefleche}[1]{\stackrel{#1}{\longrightarrow}}
\newcommand{\cDelay}{\ensuremath{d}}
\newcommand{\semantics}[1]{\ensuremath{\mathfrak{T}_{#1}}}
\newcommand{\transition}{{\ensuremath{\rightarrow}}}
\newcommand{\transitionWith}[1]{\stackrel{#1}{\mapsto}}
\newcommand{\StateSet}{\ensuremath{\mathfrak{S}}}
\newcommand{\concstate}{\ensuremath{\mathfrak{s}}}
\newcommand{\concstateinit}{\ensuremath{\concstate_\init}}
\newcommand{\gTA}{gTA}
\newcommand{\gPTA}{gPTA}
\newcommand{\gPTAs}{gPTAs}
\newcommand{\DTN}{DTN}
\newcommand{\DTNs}{DTNs}
\newcommand{\PDTN}{PDTN}
\newcommand{\PDTNs}{PDTNs}
\newcommand{\NTAs}{NTAs}
\newcommand{\NPTA}{NPTA}
\newcommand{\NPTAs}{NPTAs}
\newcommand{\gEdgeSet}{\ensuremath{E}}
\newcommand{\gtatran}{\ensuremath{\tau}}
\newcommand{\network}[2]{\ensuremath{{#1}^{#2}}}
\newcommand{\Pnetwork}[2]{\ensuremath{{#1}^{#2}}}
\newcommand{\nConfig}{\ensuremath{\mathfrak{c}}} %
\newcommand{\ConfigsSet}{\ensuremath{\mathfrak{C}}}
\newcommand{\nConfigInit}{\ensuremath{\hat{\nConfig}}}
\newcommand{\dtnconfig}{\ensuremath{\big((\loc_1,\clockval_1), \ldots, (\loc_n,\clockval_n)\big)}}
\newcommand{\dtnconfigdelay}{\ensuremath{\big((\loc_1,\clockval_1+\cDelay),\ldots,(\loc_n,\clockval_n+\cDelay)\big)}}
\newcommand{\dtnconfigsucc}{\ensuremath{\big((\loc_1',\clockval_1'),\ldots,(\loc_n',\clockval_n')\big)}}
\newcommand{\tatran}{\ensuremath{\tau}}
\newcommand{\transitionOf}[3]{{#1}\xrightarrow{#2}{#3}}
\newcommand{\gcomputation}{\ensuremath{\pi}}
\newcommand{\generalFigsScaleFactor}{1}
\newcommand{\ComplexityFont}[1]{{\sffamily\upshape #1}}
\newcommand{\NEXPTIME}{\ComplexityFont{NEXPTIME}}
\newcommand{\threeNEXPTIME}{\ComplexityFont{3-NEXPTIME}}
\newcommand{\EXPSPACE}{\ComplexityFont{EXPSPACE}}
\newcommand{\twoEXPSPACE}{\ComplexityFont{2-EXPSPACE}}
\newcommand{\eg}{e.g.,\xspace}
\newcommand{\ie}{i.e.,\xspace}
\newcommand{\suchthat}{s.t.\xspace}
\newcommand{\viz}{viz.,\xspace}
\newcommand{\WLOG}{w.l.o.g.\xspace}
\newcommand{\wrt}{w.r.t.\xspace}
\title{Parametric disjunctive timed networks}
\author{\'Etienne Andr\'e}{Université Sorbonne Paris Nord, LIPN, CNRS UMR 7030, Villetaneuse, France \and Institut universitaire de France (IUF), France \and Nantes Université, CNRS, LS2N, Nantes, France \and \url{https://lipn.univ-paris13.fr/~andre/}}{}{https://orcid.org/0000-0001-8473-9555}{Partially supported by ANR TAPAS (ANR-24-CE25-5742).}%
\author{Swen Jacobs}{CISPA Helmholtz Center for Information Security, Germany \and \url{https://swenjacobs.github.io}}{}{https://orcid.org/0000-0002-9051-4050}{}
\author{Engel Lefaucheux}{Université de Lorraine, CNRS, Inria, LORIA, F-54000 Nancy, France \and \url{https://elefauch.github.io}}{}{https://orcid.org/0000-0003-0875-300X}{Partially supported by ANR GUMMIS (ANR-25-CE48-5096).}
\authorrunning{{\'E}.\ André, S.\ Jacobs and E.\ Lefaucheux} %
\keywords{parametrised verification, parametric timed automata, verification of infinite-state systems} %
\begin{document}
\sloppy

\maketitle{}
\begin{abstract}
	We consider distributed systems with an arbitrary number of processes, modelled by timed automata that communicate through location guards: a process can take a guarded transition if at least one other process is in a given location.
	In this work, we introduce parametric disjunctive timed networks, where each timed automaton may contain timing parameters, \ie{} unknown constants.
	We investigate two problems: deciding the emptiness of the set of parameter valuations for which
		1) a given location is reachable for at least one process (local property), and
		2) a global state is reachable where all processes are in a given location (global property).
	Our main positive result is that the first problem is decidable for networks of processes with a single clock and without invariants; this result holds for arbitrarily many timing parameters---a setting with few known decidability results.
	However, it becomes undecidable when invariants are allowed, or when considering global properties, even for systems with a single parameter.
	This highlights the significant expressive power of invariants in these networks.
	Additionally, we exhibit further decidable subclasses by restraining the syntax of guards and invariants.

\end{abstract}

\newpage

\section{Introduction}\label{section:introduction}
Parametrised verification~\cite{BJKKRVW15,AD16,AST18} consists in verifying a system's behaviour across all possible configurations of a certain parameter, such as the number of processes.
It is most commonly used in the context of networks of identical finite-state processes, and it involves proving that a property holds for any number of processes.
This type of verification is crucial for distributed systems, where an arbitrary number of identical agents may be interacting, and ensures that system correctness is maintained no matter the scale of the system.

When timing constraints are involved, more powerful formalisms are needed.
Timed automata (TAs)~\cite{AD94} extend finite-state automata with clocks (measuring the time elapsing), and offer a powerful framework for the specification and verification of real-time systems.
Clock constraints are used to constrain the time to remain in a location (``invariant'') or to take a transition (``guard'').

Several works consider parametrised verification for networks of timed automata~\cite{AJ03,ADM04,AAC12,ADRST16}, showing that it quickly hits undecidability, notably when multiple clocks are involved.
Decidability in the presence of multiple clocks can be preserved by restricting the communication between processes, \eg{} to communication via \emph{location guards}: a process can take a transition guarded by a location $\loc$ if at least one other process currently occupies $\loc$.
In disjunctive timed networks, where identical processes communicate via such location guards, local reachability and safety properties can be decided for any number of clocks~\cite{SS20,AEJK24}, even in the presence of invariants~\cite{AJKS25}.

When timing constants are not known with full precision (or completely unknown, \eg{} at the beginning of the design phase), timed automata may become impractical.
Parametric timed automata (PTAs)~\cite{AHV93} address this issue by allowing the modelling and verification of real-time systems with unknown or variable timing constants modelled as \emph{timing parameters}.
This flexibility enables the analysis of system behaviour across a range of parameter valuations, ensuring correctness under diverse conditions and facilitating optimization of parameters.

Common decision problems for parametric timed automata also quickly hit undecidability: emptiness of the parameter valuations set for which a given location is reachable (``reachability-emptiness''), for a single PTA, is undecidable with as few as 3~clocks and a single timing parameter (see \cite{Andre19STTT} for a survey).

\subsection*{Contributions}
In this paper, we address the verification of systems with unknown timing constants over an arbitrary number of processes.
In that sense, this parametrised parametric timed setting can be seen as having parameters in two dimensions: timing parameters, and number of processes.
We introduce \emph{parametric disjunctive timed networks} (\PDTNs{}) as networks of identical parametric timed processes resembling PTAs, and communicating via location guards.
A combination of two types of parameters appears natural, especially when designing and verifying communication protocols.
These protocols must function regardless of the number of participants (hence the parametric size of networks), while timing parameters allow designers to adjust critical time constraints in each process during early stages of development.

\fakeParagraph{Motivating example}
We consider an example inspired by applications in the verification of asynchronous programs~\cite{GM12,AJKS25}.
In this setting, processes (or threads) can be ``posted'' at runtime to solve a task, and will terminate upon completing the task.
Our example, depicted in \cref{figure:example}, features one clock~$\clock$ per process; symbols~$\ADRactioni{i}$ %
are transition labels of the automaton.
An unbounded number of processes start in the initial location~\locinitexample{}.
In the inner loop, a process can move to location \loclisten{} in order to see whether an input channel carries data.
Once it determines that this is the case (in our example this always happens after some time), it moves to location \locpost{}, which gives the command to post a process that actually reads the data, and then can return to~\locinitexample{}.
In the outer loop, if there is a process that gives the command to read data, \ie{} a process that is in \locpost{}, then another process can accept that command and move to \locreading{}.
After reading for some time, the process will either determine that all the data has been read and move to \locdone{}, or it will timeout and move to \locpost{} to ask another process to carry on reading.
However, this scheme may run into an error if there are processes in \locdone{} and \locreading{} at the same time, modelled by a transition from \locreading{} to \locerror{} that can only be taken if \locdone{} is occupied.
The time to move to \locerror{} is parametric, and should be greater than the (unknown) duration~$\param$.
A natural problem is to identify valuations of~$\param$ for which \locerror{} is unreachable regardless of the number of processes.

\newcommand{\myangle}{28}
\newcommand{\myotherangle}{-42}
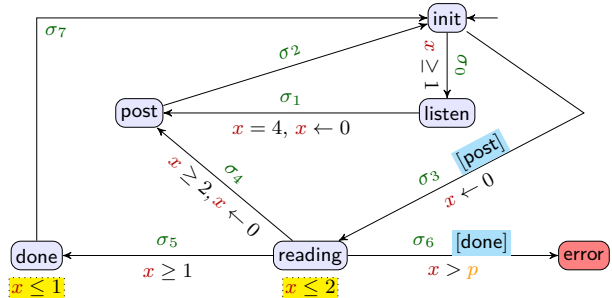
\begin{figure}[tb]
	\centering
	\scalebox{0.9}{
		\begin{tikzpicture}[PTA, font=\footnotesize, xscale=1, yscale=.7]
			\node [] (dummy) at (0,1){};
			\node (qinit) [location, initial right] at (-1,0) {$\locinitexample$};
			\node (qlisten) [location] at (-1,-2) {$\loclisten$};
			\node (qpost) [location] at (-5.5,-2) {$\locpost$};
			\node (qerror) [location, error] at (1,-5) {$\locerror$};
			\node (qreading) [location] at (-3,-5) {$\locreading$};
			\node (qdone) [location] at (-7,-5) {$\locdone$};

			\node[invariant,yshift=-.75em] (doneinvariant) at (qdone.south){$\clockx \leq 1$};
			\node[invariant,yshift=-.75em] (readinginvariant) at (qreading.south){$\clockx \leq 2$};

			\path[edge] (qlisten) edge [] node [above] {$\ADRactioni{1}$} node [below]{$\clockx = 4$, $\clockx \assign 0$} (qpost);

			\draw[edge] (qdone) -- (qdone |- qinit) node[below right]{$\ADRactioni{7}$} -- (qinit);

			\path[edge] (qpost) edge[] node [above,sloped] {$\ADRactioni{2}$} (qinit);

			\path[edge] (qreading) edge[] node [above,,rotate=\myotherangle] {$\ADRactioni{4}$} node [below,,rotate=\myotherangle] {$\clockx \geq 2, \clockx \assign 0$} (qpost);
			\path[edge] (qreading) edge[] node [above,sloped] {$\ADRactioni{5}$} node [below,sloped] {$\clockx \geq 1$} (qdone);
			\path[edge] (qreading) edge[] node [above,xshift=-1.2em]{$\ADRactioni{6}$} node [above,locguard,xshift=1.2em]{\locguardof{\locdone}} node [below,align=center]{$\clockx > \paramp$} (qerror);%

			\draw [edge] (qinit) -- (1,-2) -- node (plop) [above,rotate=\myangle,xshift=-1.2em] {$\ADRactioni{3}$} node[above,rotate=\myangle,locguard,xshift=1.2em]{\locguardof{\locpost}} node[below,rotate=\myangle]{$\clockx \assign 0$} (qreading);

			\path[edge] (qinit) edge[] node [above,sloped] {$\ADRactioni{0}$} node [below,sloped] {$\clockx \geq 1$} (qlisten);

		\end{tikzpicture}

	}
	\caption{Asynchronous data read example (variation from~\cite{AJKS25})}
	\label{figure:example}
\end{figure}

\fakeParagraph{Problems}
We focus here on the \emph{parametrised reachability-emptiness problem}: decide the emptiness of the set of timing parameter valuations for which there exists a number of processes such that a given configuration is reachable.

We consider both \emph{local} properties (reachability condition involving one process),
and \emph{global} properties (which can express the absence of deadlocks, or the fact that \emph{all} processes must reach a given location).

We also distinguish the presence or the absence of invariants---and we will see that this makes a critical difference in decidability.

In this paper, we prove several results regarding parametrised reachability-emptiness:

\begin{itemize}
	\item undecidability of local properties for \PDTNs{} with 1~clock, 1~parameter, with invariants (\cref{ss:undecidability-local-invariants}).
	\item undecidability of global properties for \PDTNs{} with 1~clock, 1~parameter, with or without invariants (\cref{ss:undecidability-global-noinvariants}).
	\item decidability for fully parametric \PDTNs{} with 1~parameter (\cref{ss:fpPDTN});
	\item decidability for \PDTNs{} when parameters are partitioned into lower-bound and upper-bound parameters (\cref{ss:LUPDTNs});
	\item decidability of local properties for \PDTNs{} with 1~clock, arbitrarily many parameters, without invariants (\cref{section:decidability:noinvariants}).
\end{itemize}
The most surprising result emphasizes the high expressive power of invariants (something which has no impact in single PTAs): local properties in \PDTNs{} are decidable without invariants but become undecidable in their presence, for only 1~clock.
In addition, both local and global properties are undecidable with invariants for a single clock and a single parameter---a setting decidable in the context of PTAs taken in isolation: that is to say, while the communication primitive is weak, it is sufficiently expressive to encode models such a 2-counter machines.

Additionally, both as a proof ingredient and as an interesting result \emph{per se}, we show in \cref{section:undecidability:invariants} that the reachability-emptiness problem is undecidable for 1~clock, 1~parameter, with and without invariants and any \emph{a priori} fixed number ($\geq 3$) of processes---even though the parametrised version of this problem, \ie{} for any (non \emph{a priori} fixed) number of processes, is decidable.

\fakeParagraph{Related work}
The concept of identical processes in a timed setting was mainly addressed in networks of processes that either communicate via $k$-wise synchronization~\cite{AJ03,ADM04,ADRST16} or via location guards~\cite{SS20,AEJK24,AJKS25}.
The former model is equivalent to a variant of timed Petri nets~\cite{Ramchandani73,MF76,GMMP91,BS95}, whereas the latter would be equivalent to a form of timed Petri nets restricted to immediate observation steps (as in~\cite{ERW19,RWE20}), which however have not been studied separately to the best of our knowledge.

Very few works study decidability results when combining two types of parameters, \ie{} discrete (number of processes) and continuous (timing parameters).
In~\cite{DG04,LSLD15}, security protocols are studied with unknown timing constants, and an unbounded number of participants.
However, the focus is not on decidability, and the general setting is undecidable.
In~\cite{AKPP16}, action parameters (that can be seen as Boolean variables) and continuous timing parameters are combined (only linearly though) in an extension of PTAs; the mere emptiness of the sets of action and timing parameters for which a location is reachable is undecidable.
In contrast, we exhibit in this work some decidable cases.

The closest work to ours, and presumably the only one to consider timing parameters in the setting of parametrised verification, is in~\cite{ADFL19} where parametric timed broadcast protocols (PTBPs) are introduced.
Our contributions differ in the communication setting: while we consider location guards, \cite{ADFL19} considers broadcast in cliques (in which every message reaches every process) and in reconfigurable topologies (in which the set of receivers is chosen non-deterministically).
Moreover, in this work we study the power of invariants, which are absent from~\cite{ADFL19}.
Our decidability results are also significantly better than in~\cite{ADFL19}:
In~\cite{ADFL19}, parametrised reachability-emptiness (for local properties) is undecidable for PTBPs composed of general PTAs, even with a single clock and without invariants, both in the reconfigurable semantics and in the clique semantics.
The only decidable subcase (for both semantics) for reachability-emptiness in~\cite{ADFL19} is severely restricted with 3 conditions that must hold simultaneously: a single clock per process, parameters partitioned into lower-bound and upper-bound parameters, and bounded (possibly rational-valued) parameters---relaxing any of these conditions leads to undecidability.
In contrast, we prove here decidability for general \PDTNs{} with 1~clock and arbitrarily many parameters, \emph{or} for parameters partitioned into lower-bound and upper-bound.
Also note that our results do not reuse any proof ingredients from~\cite{ADFL19} due to the different communication.

\fakeParagraph{Outline}
We recall the necessary material in \cref{section:PTAs}. %
We formalize parametric disjunctive timed networks in \cref{section:PDTNs}, and our problem in \cref{section:problems}.
We prove undecidability results in \cref{section:undecidability} and decidability results in \cref{section:decidability}.
We conclude in \cref{section:conclusion}.

\section{Parametric timed automata}\label{section:PTAs}

We denote by $\setN, \setNpos, \setZ, \setRgeqzero$ %
the sets of non-negative integers, strictly positive integers, integers, %
and non-negative reals, respectively.

\emph{Clocks} are real-valued variables that all evolve over time at the same rate.
Throughout this paper, we assume a set~$\ClockSet = \{ \clocki{1}, \dots, \clocki{\ClockCard} \} $ of \emph{clocks}.
A \emph{clock valuation} is a function
$\clockval : \ClockSet \to \setRgeqzero$, assigning a non-negative value to each clock.
We write $\ClocksZero$ for the clock valuation assigning $0$ to all clocks.
Given $\resets \subseteq \ClockSet$, we define the \emph{reset} of a valuation~$\clockval$, denoted by $\reset{\clockval}{\resets}$, as follows: $\reset{\clockval}{\resets}(\clock) = 0$ if $\clock \in \resets$, and $\reset{\clockval}{\resets}(\clock)=\clockval(\clock)$ otherwise.
Given a constant $d \in \setRgeqzero$, $\clockval + d$ denotes the valuation \suchthat\ $(\clockval + d)(\clock) = \clockval(\clock) + d$, for all $\clock \in \ClockSet$.

A \emph{(timing) parameter} is an unknown %
integer-valued
constant.
Throughout this paper, we assume a set~$\ParamSet = \{ \parami{1}, \dots, \parami{\ParamCard} \} $ of \emph{parameters}.
A \emph{parameter valuation} $\pval$ is a function $\pval : \ParamSet \to \setN$. %
A \emph{constraint}~$\constraint$ is a conjunction of inequalities over $\ClockSet \cup \ParamSet$ of the form
$\clock \compOp \sum_{1 \leq i \leq \ParamCard} \alpha_i \times \parami{i} + d$, with
$\clock \in \ClockSet$,
${\compOp} \in \{<, \leq, =, \geq, >\}$,
$\parami{i} \in \ParamSet$,
and
$\alpha_i, d \in \setZ$.
We call $d$ a \emph{constant term}.
Given~$\constraint$, we write~$\clockval\models\pval(\constraint)$ if %
the expression obtained by replacing each~$\clock$ with~$\clockval(\clock)$ and each~$\param_i$ with~$\pval(\param_i)$ in~$\constraint$ evaluates to true.
Let $\ConstraintsXP$ denote the set of constraints over~$\ClockSet \cup \ParamSet$.
Let $\CTrue$ denote the constraint made of no inequality, \ie{} representing the whole set of clock and parameter valuations.

\begin{definition}[PTA~\cite{AHV93}]\label{definition:PTA}
	A PTA $\PTA$ is a tuple $\PTA = (\ActionSet, \LocSet, \locinit, %
		\ClockSet, \ParamSet, \invariant, \EdgeSet)$, where:
	\begin{oneenumerate}%
		\item $\ActionSet$ is a finite set of actions;
		\item $\LocSet$ is a finite set of locations;
		\item $\locinit \in \LocSet$ is the initial location;
		\item $\ClockSet$ is a finite set of clocks;
		\item $\ParamSet$ is a finite set of parameters;
		\item $\invariant : \LocSet \to \ConstraintsXP$ is the invariant, assigning to every $\loc \in \LocSet$ a constraint $\invariant(\loc)$ over $\ClockSet \cup \ParamSet$;
		\item $\EdgeSet \subseteq \LocSet \times \ConstraintsXP \times \ActionSet \times 2^{\ClockSet} \times \LocSet$ is a finite set of edges  $\tatran = (\loc,\guard,\action,\resets,\loc')$
		where~$\loc,\loc'\in \LocSet$ are the source and target locations,
			$\guard$ is a constraint (called \emph{guard}),
			$\action \in \ActionSet$,
			and
			$\resets\subseteq \ClockSet$ is a set of clocks to be reset.
	\end{oneenumerate}%
	We say that a location $\loc$ \emph{does not have an invariant} if $\invariant(\loc) = \CTrue$.
\end{definition}
\begin{definition}[Valuation of a PTA]\label{def:valuation-PTA}
	Given a PTA~$\PTA$ and a parameter valuation~$\pval$, we denote by $\valuate{\PTA}{\pval}$ the non-parametric structure where all occurrences of a parameter~$\parami{i}$ have been replaced by~$\pval(\parami{i})$.
	$\valuate{\PTA}{\pval}$ is a \emph{timed automaton}.
\end{definition}

We recall the concrete semantics of a TA using a timed transition system~(TTS).

\begin{definition}[Semantics of a TA]
	Given a PTA $\PTA = (\ActionSet, \LocSet, \locinit, %
		\ClockSet, \ParamSet, \invariant, \EdgeSet)$ and a parameter valuation~$\pval$,
	the semantics of $\valuate{\PTA}{\pval}$ is given by the TTS $\semantics{\valuate{\PTA}{\pval}} = (\StateSet, \concstateinit, \ActionSet \cup \setRgeqzero, \transition)$, with
	\begin{enumerate}
		\item $\StateSet = \big\{ (\loc, \clockval) \in \LocSet \times \setRgeqzero^\ClockCard \mid \clockval \models \valuate{\invariant(\loc)}{\pval} \big\}$,
		 $\concstateinit = (\locinit, \ClocksZero) $,
		\item  $\transition$ consists of the discrete and (continuous) delay transition relations:
		\begin{enumerate}
			\item discrete transitions: $(\loc,\clockval) \transitionWith{\tatran} (\loc',\clockval')$,
			if $(\loc, \clockval) , (\loc',\clockval') \in \StateSet$, and there exists $\tatran = (\loc,\guard,\action,\resets,\loc') \in \EdgeSet$, such that $\clockval'= \reset{\clockval}{\resets}$, and $\clockval\models\pval(\guard$).
			\item delay transitions: $(\loc,\clockval) \transitionWith{\cDelay} (\loc, \clockval + \cDelay)$, with $\cDelay \in \setRgeqzero$, if $\forall \cDelay' \in [0, \cDelay], (\loc, \clockval + \cDelay') \in \StateSet$.
		\end{enumerate}
	\end{enumerate}
\end{definition}

Moreover we write $(\loc, \clockval)\longuefleche{(\cDelay, \tatran)} (\loc',\clockval')$ for a combination of a delay and a discrete transition if
$\exists  \clockval'' :  (\loc,\clockval) \transitionWith{\cDelay} (\loc,\clockval'') \transitionWith{\tatran} (\loc',\clockval')$.

Given a TA~$\TA$ with concrete semantics $\semantics{\TA}$, we refer to the states of~$\StateSet$ as the \emph{concrete states} of~$\TA$.
A \emph{run} of~$\TA$ is an alternating sequence of concrete states of~$\TA$ and pairs of delays and edges starting from the initial state $\concstateinit$ of the form
$(\loci{0}, \clockval_{0}), (d_0, \tatran_0), (\loci{1}, \clockval_{1}), \cdots$
with
$i = 0, 1, \dots$, $\tatran_i \in \EdgeSet$, $d_i \in \setRgeqzero$ and
$(\loci{i}, \clockval_{i}) \longuefleche{(d_i, \tatran_i)} (\loci{i+1}, \clockval_{i+1})$.
Given a TA~$\TA$, we say that a location~$\loc$ is \emph{reachable} if there exists a state $(\loc, \clockval)$ that appears on a run of~$\TA$.

\fakeParagraph{Reachability-emptiness}
Given a PTA~$\PTA$ and a location~$\loc$, the \emph{reachability-emptiness problem} asks whether the set of parameter valuations~$\pval$ such that $\loc$ is reachable in $\valuate{\PTA}{\pval}$ is empty.

\begin{definition}[L/U-PTA~\cite{HRSV02}]\label{definition:LUPTA}
	An \emph{L/U-PTA} (lower-bound/upper-bound PTA) is a PTA where~$\ParamSet$ is partitioned into $\ParamSet = \ParamSetL \uplus \ParamSetU$, where $\ParamSetL$ (resp.\ $\ParamSetU$) denotes lower-bound (resp.\ upper-bound) parameters,
	so that each lower-bound (resp.\ upper-bound) parameter~$\parami{i}$ must be such that,
	for every constraint $\clock \compOp \sum_{1 \leq i \leq \ParamCard} \alpha_i \times \parami{i} + d$, we have:
	\begin{oneenumerate}
		\item ${\compOp} \in \set{ \leq, < }$ implies $\alpha_i \leq 0$ (resp.\ $\alpha_i \geq 0$), and
		\item ${\compOp} \in \set{ \geq, > }$ implies $\alpha_i \geq 0$ (resp.\ $\alpha_i \leq 0$).
	\end{oneenumerate}
\end{definition}
A PTA is \emph{fully parametric} whenever it has no constant term (apart from~0):

\begin{definition}[Fully parametric PTA~{\cite[Definition~4.6]{HRSV02}}]\label{definition:fpPTA}
	A PTA~$\PTA$ is \emph{fully parametric} if %
		every constraint in~$\PTA$ is of the form $\clock \compOp \sum_{1 \leq i \leq \ParamCard} \alpha_i \times \parami{i}$, with
	$\parami{i} \in \ParamSet$ and $\alpha_i \in \setZ$.
\end{definition}
Our subsequent undecidability proofs work by reduction from
the halting problem for 2-counter machines.
We borrow the following material from~\cite{ALR22}.
A deterministic 2-counter machine (``2CM'')~\cite{Minsky67} has two non-negative counters $\Counter_1$ and~$\Counter_2$, a finite number of states and a finite number of transitions, which can be of the form (for $\counterGenericIndex \in \{ 1, 2\}$):
	\begin{ienumerate}
		\item ``when in state~$\cms_i$, increment~$\Counter_\counterGenericIndex$ and go to~$\cms_j$''; or
		\item ``when in state~$\cms_i$, if $\Counter_\counterGenericIndex = 0$ then go to~$\cms_k$, otherwise decrement $\Counter_\counterGenericIndex$ and go to~$\cms_j$''.
	\end{ienumerate}

The 2CM starts in state~$\cms_0$ with the counters set to~0.
The machine follows a deterministic transition function, meaning for each combination of state and counter conditions, there is exactly one action to take.
The \emph{halting problem} consists in deciding whether some distinguished state called \cmshalt{} can be reached or not.
This problem is known to be undecidable~\cite{Minsky67}.
\section{Parametric disjunctive timed networks}\label{section:PDTNs}
We extend guarded TAs defined in~\cite{AEJK24} with timing parameters as in PTAs.
We follow the terminology and use abbreviations from~\cite{SS20,AEJK24}.

\begin{definition}[Guarded Parametric Timed Automaton (\gPTA)]\label{definition:gPTA}
	A \gPTA{} $\PTA$ is a tuple $\PTA = (\ActionSet, \LocSet, \locinit, %
		\ClockSet, \ParamSet, \invariant, \gEdgeSet)$, where:
	\begin{oneenumerate}%
		\item $\ActionSet$ is a finite set of actions;
		\item $\LocSet$ is a finite set of locations;
		\item $\locinit \in \LocSet$ is the initial location;
		\item $\ClockSet$ is a finite set of clocks;
		\item $\ParamSet$ is a finite set of parameters;
		\item $\invariant : \LocSet \to \ConstraintsXP$ is the invariant, assigning to every $\loc \in \LocSet$ a constraint $\invariant(\loc)$ (called \emph{invariant});
		\item $\gEdgeSet \subseteq \LocSet \times \ConstraintsXP \times \left( \LocSet \cup \{\top\} \right) \times \ActionSet \times 2^{\ClockSet} \times \LocSet$ is a finite set of edges $\gtatran = (\loc, \guard, \locguard, \action, \resets, \loc')$
		where~$\loc,\loc'\in \LocSet$ are the source and target locations,
		$\guard$ is a constraint (called \emph{guard}),
		$\locguard$ is the location guard,
		$\action \in \ActionSet$,
		and
		$\resets\subseteq \ClockSet$ is a set of clocks to be reset.
	\end{oneenumerate}%
\end{definition}

Intuitively, an edge $\gtatran = (\loc, \guard, \locguard, \action, \resets, \loc') \in \gEdgeSet$ takes the automaton from location $\loc$ to~$\loc'$;
$\gtatran$~can only be taken if \emph{guard}~$\guard$ and \emph{location guard}~$\locguard$ are both satisfied, and it resets all clocks in~$\resets$.
Note that satisfaction of location guards is only meaningful in a \emph{network} of~\gPTAs{} (defined below).
Intuitively, a location guard~$\locguard$ is satisfied if it is $\top$ or if another automaton in the network currently occupies location~$\locguard$.
\begin{example}
	In \cref{figure:example}, the transition to~\locerror{} is guarded both by a guard $\clock > \param$ and by a location guard $\locguardof{\locdone}$.
	In contrast, the location guard to \locdone{} is~$\top$ (and omitted in the figure), \ie{} it can be taken without assumption on the location of other processes.
\end{example}

A \gPTA{} is an \emph{L/U-\gPTA{}} if parameters are partitioned into lower-bound and upper-bound parameters (as in \cref{definition:LUPTA}).
A \gPTA{} is \emph{fully parametric} whenever it has no constant term (apart from~0) in its guards and invariants, similarly to \cref{definition:fpPTA}.

\fakeParagraph{Recalling the semantics of \NTAs{}}
A \gPTA{} with $\ParamSet = \emptyset$ is called a guarded timed automaton (\gTA)~\cite{AEJK24}.
Given a \gPTA{} $\PTA$ and a parameter valuation~$\pval$, we denote by $\valuate{\PTA}{\pval}$ the non-parametric structure where all occurrences of a parameter~$\parami{i}$ have been replaced by~$\pval(\parami{i})$; $\valuate{\PTA}{\pval}$ is a \gTA{}.

Let $\TA$ be a~\gTA{}.
We denote by $\network{\TA}{n}$ the parallel composition $\TA \parallel \cdots \parallel \TA$ of $n$~copies of~$\TA$, also called a \emph{network of timed automata}~(NTA) of size~$n$.
Each copy of~$\TA$ in the NTA~$\network{\TA}{n}$ is called a \emph{process}.
A \emph{configuration}
$\nConfig$ of an NTA $\network{\TA}{n}$ is a tuple $\nConfig = \dtnconfig$, where every $(\loc_i,\clockval_i)$ is a concrete state of~$\TA$.
The semantics of $\network{\TA}{n}$ can be defined as a TTS $(\ConfigsSet, \nConfigInit,T)$, where $\ConfigsSet$
denotes the set of all configurations of~$\network{\TA}{n}$, $\nConfigInit$ is the unique initial configuration $(\locinit,\mathbf{0})^n$, and the transition relation $T$ is the union of the following delay and discrete transitions:

\begin{description}\label{semantics-NTAs}
	\item[delay transition] $ \transitionOf {\dtnconfig} {\cDelay} {\dtnconfigdelay}$, with $\cDelay \in \setRgeqzero$,
	if $\forall i \in \{1, \dots, n\} , \forall \cDelay' \in [0, \cDelay] : \clockval_i + \cDelay' \models \invariant(\loc_i)$, \ie{} we can delay $\cDelay \in \setRgeqzero $ units of time if all clock invariants are satisfied until the end of the delay.

	\item[discrete transition] $\transitionOf{\dtnconfig } {(i, \action)} {\dtnconfigsucc }$ for some $i \in \{1, \dots, n\}$ if
		\begin{oneenumerate}%
			\item $\transitionOf{(\loc_i,\clockval_i)} {\gtatran} {(\loc_i',\clockval_i')}$ is a discrete transition%
				\footnote{%
				Strictly speaking, $\transitionOf{(\loc_i,\clockval_i)} {\gtatran} {(\loc_i',\clockval_i')}$ is a transition of the TA obtained from~$\TA$ by replacing location guards with~$\top$.
			}
			of~$\TA$
			with
			$\gtatran = (\loc_i, \guard, \locguard, \action, \resets, \loc_i')$ for some $\guard$, $\locguard$ and~$\resets$,
			\item $\locguard = \top$ or $\loc_j = \locguard$ for some $j \in \{1, \dots, n\} \setminus \{i\}$, and
			\item $\loc_j' = \loc_j$ and $\clockval_j' = \clockval_j$ for all $j \in \{1, \dots, n\} \setminus \{i\}$.
		\end{oneenumerate}%
\end{description}

That is, location guards $\locguard$ are interpreted as disjunctive guards: unless $\locguard = \top$, at least one of the other processes needs to occupy location~$\locguard$ in order for process~$i$ to pass this guard.

We write $\transitionOf{\nConfig}{\cDelay, (i,\action)}{\nConfig''}$ for a delay transition $\transitionOf{\nConfig}{\cDelay}{ \nConfig'}$ followed by a discrete transition $\transitionOf{\nConfig'}{(i,\action)}{\nConfig''}$.
Then, a \emph{timed path} of~$\network{\TA}{n}$ is a finite sequence $\gcomputation = \nConfig_0 \xrightarrow{\cDelay_0, (i_0, \action_0)} \cdots \xrightarrow{\cDelay_{l-1}, (i_{l-1}, \action_{l-1})} \nConfig_l$.
A timed path $\gcomputation$ of~$\network{\TA}{n}$ is a \emph{computation} if $\nConfig_0 = \nConfigInit$.
We say that $\nConfig_l$ is \emph{reachable} in~$\network{\TA}{n}$.

We write $\loc \in \nConfig$ if $\nConfig = \big((\loc_1,\clockval_1),\ldots,(\loc_n,\clockval_n)\big)$ and $\loc = \loc_i$ for some $i \in \{1, \dots, n\}$.
We say that a location $\loc$ is \emph{reachable} in~$\network{\TA}{n}$ if there exists a reachable configuration $\nConfig$ s.t.\ $\loc \in \nConfig$.

Given a \gPTA{}~$\PTA$, we denote by $\Pnetwork{\PTA}{n}$ the parallel composition $\PTA \parallel \cdots \parallel \PTA$ of $n$~copies of~$\PTA$, also called a \emph{network of parametric timed automata}~(\NPTA{}).
Given an \NPTA{} $\Pnetwork{\PTA}{n}$ and a parameter valuation~$\pval$, we denote by $\valuate{\Pnetwork{\PTA}{n}}{\pval}$ the non-parametric structure where all occurrences of a parameter~$\parami{i}$ have been replaced by~$\pval(\parami{i})$;
note that $\valuate{\Pnetwork{\PTA}{n}}{\pval}$ is an NTA.

\begin{definition}
	A given \gPTA{}~$\PTA$ induces a \emph{parametric disjunctive timed network (\PDTN{})}
	$\Pnetwork{\PTA}{\infty}$, defined as the following family of~\NPTAs{}:
	$\Pnetwork{\PTA}{\infty} = \{\Pnetwork{\PTA}{n} \mid n \in \setNpos\}$.
\end{definition}

Given a parametric disjunctive timed network $\Pnetwork{\PTA}{\infty}$ and a parameter valuation~$\pval$, $\Pnetwork{(\valuate{\PTA}{\pval})}{\infty}$ is a \emph{disjunctive timed network} (\DTN{})~\cite{SS20}.

Given a location~$\loc$ of a \gPTA{}~$\PTA$, given a parameter valuation~$\pval$,
	we say that $\loc$ is reachable in the \DTN{} $\Pnetwork{(\valuate{\PTA}{\pval})}{\infty}$ if there exists~$n \in \setNpos$ such that $\loc$ is reachable in $\Pnetwork{(\valuate{\PTA}{\pval})}{n}$.

\fakeParagraph{Subclasses of \PDTNs{}}
A \PDTN{} $\Pnetwork{\PTA}{\infty}$ induced by a \gPTA{}~$\PTA$ is an \emph{L/U-\PDTN{}} if $\PTA$ is an L/U-\gPTA{}.
Similarly, $\Pnetwork{\PTA}{\infty}$ is a \emph{fully parametric \PDTN{}} if $\PTA$ is a fully parametric \gPTA{}.

\section{Problems for parametric disjunctive timed networks}\label{section:problems}

\fakeParagraph{Reachability}
In~\cite{AEJK24},
the parametrised reachability problem (PR)
consists in deciding whether, given a \gTA{}~$\TA$ and a location~$\loc$, there exists $n \in \setN$ such that $\loc$ is reachable in~$\network{\TA}{n}$.
Here, we consider a parametric version.
The emptiness extension consists in asking whether the set of timing parameter valuations for which PR holds is empty.

\defProblem
	{Parameterized reachability-emptiness problem (PR-e)}
	{a \gPTA{} $\PTA$ and a location~$\loc$}
	{Decide the emptiness of the set of timing parameter valuations~$\pval$ for which $\loc$ is reachable in~$\Pnetwork{(\valuate{\PTA}{\pval})}{\infty}$.
	}
\begin{example}
	In \cref{figure:example}, PR-e with \locerror{} as target location does \emph{not} hold: for $\pval(\param) = 1$, location \locerror{} can be reached for $\geq 3$ processes.
\end{example}

Without invariants, $|\LocSet|$ is a \emph{cutoff} (\ie{} a number of processes above which the reachability is homogeneous---better cutoffs are known for local properties).
Intuitively, this is because, without invariants, a single process that reaches such a location can stay there and enable the guard forever.
However, with invariants, such a simple cutoff does not work (even in the absence of timing parameters), since invariants might force a process to leave a location and lots of different processes might be needed to occupy a location at different points in time.

\fakeParagraph{Global reachability}\label{paragraph:definition:global}
Global properties refer to the numbers of processes in given locations; we express these using constraints.
Formally, a \emph{global reachability property} is defined by a constraint $\varphi$ in the following grammar:
\( \varphi ::= \#\loc \geq 1 \mid \#\loc = 0  \mid \varphi \land \varphi \mid \varphi \lor \varphi\),
where
$\loc \in \LocSet$ and $\#\loc$ refers to the number of processes in $\loc$.
Note that the ``$\#\loc = 0$'' term is responsible for the ``global'' nature of such properties, \ie{} it must hold for \emph{all} processes that they are not in~$\loc$.
The satisfaction of a constraint $\varphi$ by $\nConfig=\big((\loc_1,\clockval_1),\ldots,(\loc_n,\clockval_n)\big)$ is defined naturally:
$\nConfig \models \#\loc \geq 1$ if $\loc \in \nConfig$;
$\nConfig \models \#\loc = 0$ if $\loc \notin \nConfig$;
and as usual for Boolean combinations.
The parametrised global reachability problem (PGR) consists in deciding whether, given a \gTA{}~$\TA$ and a global reachability property $\varphi$, there exists $n \in \setN$ such that a configuration $\nConfig$ with $\nConfig \models \varphi$ is reachable in~$\network{\TA}{n}$.
Again, we extend this problem to the emptiness of the set of timing parameters:

\defProblem
	{Parameterized global reachability-emptiness problem (PGR-e)}
	{a \gPTA{} $\PTA$ and a global reachability property $\varphi$}
	{Decide the emptiness of the set of timing parameter valuations~$\pval$ for which
		a configuration $\nConfig$ with $\nConfig \models \varphi$ is reachable in~$\Pnetwork{(\valuate{\PTA}{\pval})}{\infty}$.
	}

As special cases, the parametrised global reachability problem includes \emph{control-state reachability} (where $\varphi$ is $\#\loc \geq 1$ for a single location~$\loc$; also expressible as a PR-e problem) and \emph{detection of timelocks that are due to location guards} (where $\varphi$ is a disjunction over conjunctions of the form $\#\loc \geq 1 \land \#\loc_1 = 0 \land \cdots \land \#\loc_m = 0$, where $\loc$ can only be left through edges guarded by one of the $\loc_1, \ldots, \loc_m$).
We will be particularly interested in the global property asking whether \emph{all} processes reach a given final configuration
(where $\varphi$ is of the form $\#\loc_1 = 0 \land \cdots \land \#\loc_m = 0$ with $m = |\LocSet|-1$),
sometimes called the \emph{\textsc{target} problem}~\cite{DSZ10}.

\begin{example}
	In \cref{figure:example}, consider the global property~$\varphi$ stating that all processes must be in~\locerror{}.
	This property can be written as $\#\locinitexample = 0 \land \#\loclisten = 0 \land \#\locpost = 0 \land \#\locreading = 0 \land \#\locdone = 0$;
	then, PGR-e holds: no parameter valuation can allow all processes to be simultaneously in~\locerror{}, whatever the number of processes (this comes from the fact that the transition to \locerror{} is guarded by \locdone{}, so at least one process must be elsewhere).
\end{example}
\section{Undecidability results}\label{section:undecidability}
\subsection{Fixed number of processes}\label{section:undecidability:invariants}

We first consider a fixed number of processes; the problem addressed here is therefore not (yet) parametrised reachability-emptiness, but only reachability-emptiness.
These results will be used as important proof ingredients for the results in \cref{ss:undecidability-local-invariants,ss:undecidability-global-noinvariants}.
(Then, the three results in \cref{section:decidability} use 3 different proof techniques, completely different from this one.)

\subsubsection{Undecidability for 3 processes}

In the following, we show that the emptiness of the parameter valuations set for which a location is reachable in an \NPTA{} made of exactly 3~processes (``reachability-emptiness'') is undecidable.
We first prove the result with invariants (\cref{proposition:undecidability:3-processes-invariants}), and then show it also holds without (\cref{proposition:undecidability:3-processes-noinvariants}).
The idea is that invariants are not needed for 3~processes, but will be necessary when proving undecidability for an unbounded number of processes (\cref{theorem:undecidability-local-invariants}).

Before that, we first reprove a well-known result stating that reachability-emptiness is undecidable for PTAs with 3 clocks and a single parameter.
This result was already proved (with 3 or more clocks) in, \eg{}~\cite{AHV93,BBLS15,ALM20} with various proof flavors.
The construction we introduce in the proof of \cref{lemma:undecidability:reach-emptiness:integers} will be then reused and modified in the proof of \cref{proposition:undecidability:3-processes-invariants}, which is why we give it first with full details.

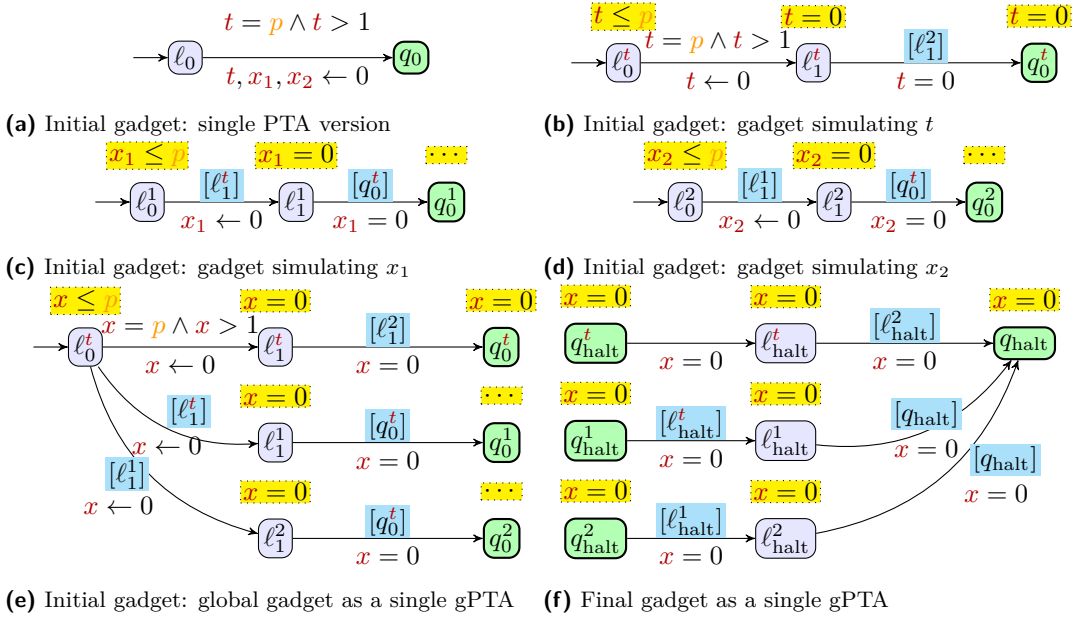
\begin{figure*}
	\begin{subfigure}[b]{.5\textwidth}
	\centering
	\scalebox{\generalFigsScaleFactor}{
			\begin{tikzpicture}[PTA, node distance=2.5cm]
			\node[location, initial] (l0) {$\loci{0}$};
			\node[location2CM, right=of l0] (s0) {$\cmspta_0$};

			\path (l0) edge node[above, yshift=.5em, align=center]{$\clockCMt = \paramp \land \clockCMt > 1$} node[below]{$\clockCMt, \clockCMcone, \clockCMctwo \assign 0$} (s0);
			\end{tikzpicture}
	}

	\caption{Initial gadget: single PTA version}
	\label{figure:2CM:original:initial}
	\end{subfigure}
	\begin{subfigure}[b]{.5\textwidth}
	\centering
	\scalebox{\generalFigsScaleFactor}{
		\begin{tikzpicture}[PTA, node distance=2.5cm]
			\node[location, initial] (l0) {$\locij{0}{\clockCMt}$};
			\node[location, right of=l0] (l1) {$\locij{1}{\clockCMt}$};
			\node[location2CM, right=of l1] (s0) {$\cmspta_0^{\clockCMt}$};

			\node[invariantNorth] at (l0.north) {$\clockCMt \leq \paramp$};
			\node[invariantNorth] at (l1.north) {$\clockCMt = 0$};
			\node[invariantNorth] at (s0.north) {$\clockCMt = 0$};

			\path (l0) edge node[above, align=center]{$\clockCMt = \paramp \land \clockCMt > 1$} node [below]{$\clockCMt \assign 0$} (l1);
			\path (l1) edge node[above, locguard]{$\locguardof{\locij{1}{2}}$} node[below]{$\clockCMt = 0$}  (s0);
		\end{tikzpicture}
	}

	\caption{Initial gadget: gadget simulating~$\clockCMt$}
	\label{figure:2CM:initial:t}
	\end{subfigure}
	\begin{subfigure}[b]{.5\textwidth}
	\centering
	\scalebox{\generalFigsScaleFactor}{
		\begin{tikzpicture}[PTA, node distance=1.5cm]
			\node[location, initial] (l0) {$\locij{0}{1}$};
			\node[location, right=of l0] (l1) {$\locij{1}{1}$};
			\node[location2CM, right=of l1] (s0) {$\cmspta_0^1$};

			\node[invariantNorth] at (l0.north) {$\clockCMcone \leq \paramp$};
			\node[invariantNorth] at (l1.north) {$\clockCMcone = 0$};
			\node[invariantNorth] at (s0.north) {$\cdots$};

			\path (l0) edge node[above, locguard]{$\locguardof{\locij{1}{\clockCMt}}$} node [below]{$\clockCMcone \assign 0$} (l1);
			\path (l1) edge node[above, locguard]{$\locguardof{\cmspta_0^{\clockCMt}}$} node[below]{$\clockCMcone = 0$} (s0);
		\end{tikzpicture}
	}

	\caption{Initial gadget: gadget simulating~$\clockCMcone$}
	\label{figure:2CM:initial:x1}
	\end{subfigure}
	\begin{subfigure}[b]{.5\textwidth}
	\centering
	\scalebox{\generalFigsScaleFactor}{
	\begin{tikzpicture}[PTA, node distance=1.5cm]
			\node[location, initial] (l0) {$\locij{0}{2}$};
			\node[location, right=of l0] (l1) {$\locij{1}{2}$};
			\node[location2CM, right=of l1] (s0) {$\cmspta_0^2$};

			\node[invariantNorth] at (l0.north) {$\clockCMctwo \leq \paramp$};
			\node[invariantNorth] at (l1.north) {$\clockCMctwo = 0$};
			\node[invariantNorth] at (s0.north) {$\cdots$};

			\path (l0) edge node[above, locguard]{$\locguardof{\locij{1}{1}}$} node [below]{$\clockCMctwo \assign 0$} (l1);
			\path (l1) edge node[above, locguard]{$\locguardof{\cmspta_0^{\clockCMt}}$} node[below]{$\clockCMctwo = 0$} (s0);
	\end{tikzpicture}
	}

	\caption{Initial gadget: gadget simulating~$\clockCMctwo$}
	\label{figure:2CM:initial:x2}
	\end{subfigure}
	\begin{subfigure}[b]{.5\textwidth}
	\centering
	\scalebox{\generalFigsScaleFactor}{
		\begin{tikzpicture}[PTA, node distance=2.5cm]
			\node[location, initial] at (0,0) (l0) {$\locij{0}{\clockCMt}$};
			\node[location, right of=l0] (l1) {$\locij{1}{\clockCMt}$};
			\node[location2CM, right=of l1] (s0) {$\cmspta_0^{\clockCMt}$};
			\node[invariantNorth] at (l0.north) {$\clockx \leq \paramp$};
			\node[invariantNorth] at (l1.north) {$\clockx = 0$};
			\node[invariantNorth] at (s0.north) {$\clockx = 0$};

			\node[location, below=of l1, yshift=5em] (l1_1) {$\locij{1}{1}$};
			\node[location2CM, right=of l1_1] (s0_1) {$\cmspta_0^1$};
			\node[invariantNorth] at (l1_1.north) {$\clockx = 0$};
			\node[invariantNorth] at (s0_1.north) {$\cdots$};

			\node[location, below=of l1_1, yshift=5em] (l1_2) {$\locij{1}{2}$};
			\node[location2CM, right=of l1_2] (s0_2) {$\cmspta_0^2$};
			\node[invariantNorth] at (l1_2.north) {$\clockx = 0$};
			\node[invariantNorth] at (s0_2.north) {$\cdots$};

			\path (l0) edge node[above, align=center]{$\clockx = \paramp \land \clockx > 1$} node [below]{$\clockx \assign 0$} (l1);
			\path (l1) edge node[above, locguard]{$\locguardof{\locij{1}{2}}$} node[below]{$\clockx = 0$}  (s0);

			\path (l0) edge[bend right] node[locguard]{$\locguardof{\locij{1}{\clockCMt}}$} node[below]{$\clockx \assign 0$} (l1_1);
			\path (l1_1) edge node[above, locguard]{$\locguardof{\cmspta_0^{\clockCMt}}$} node[below]{$\clockx = 0$} (s0_1);

			\path (l0) edge[bend right] node[locguard, left]{$\locguardof{\locij{1}{1}}$} node[below left,yshift=-.5em, xshift=.5em]{$\clockx \assign 0$} (l1_2);
			\path (l1_2) edge node[above, locguard]{$\locguardof{\cmspta_0^{\clockCMt}}$} node[below]{$\clockx = 0$} (s0_2);
		\end{tikzpicture}
	}

	\caption{Initial gadget: global gadget as a single \gPTA{}}
	\label{figure:2CM:initial:full}
	\end{subfigure}
	\begin{subfigure}[b]{.5\textwidth}
	\centering
	\scalebox{\generalFigsScaleFactor}{
		\begin{tikzpicture}[PTA, node distance=2.5cm]
			\node[location2CM] (l0_t) {$\cmspta_{\cmshaltname}^{\clockCMt}$};
			\node[location, right of=l0_t] (l1_t) {$\locij{\cmshaltname}{\clockCMt}$};
			\node[location2CM, right=of l1] (qhalt) {$\lochalt$};
			\node[invariantNorth] at (l0_t.north) {$\clockx = 0$};
			\node[invariantNorth] at (l1_t.north) {$\clockx = 0$};
			\node[invariantNorth] at (qhalt.north) {$\clockx = 0$};

			\node[location2CM, below=of l0_t, yshift=5em] (l0_1) {$\cmspta_{\cmshaltname}^{1}$};
			\node[location, right of=l0_1] (l1_1) {$\locij{\cmshaltname}{1}$};
			\node[invariantNorth] at (l0_1.north) {$\clockx = 0$};
			\node[invariantNorth] at (l1_1.north) {$\clockx = 0$};

			\node[location2CM, below=of l0_1, yshift=5em] (l0_2) {$\cmspta_{\cmshaltname}^{2}$};
			\node[location, right of=l0_2] (l1_2) {$\locij{\cmshaltname}{2}$};
			\node[invariantNorth] at (l0_2.north) {$\clockx = 0$};
			\node[invariantNorth] at (l1_2.north) {$\clockx = 0$};

			\path (l0_t) edge %
				node[below]{$\clockx = 0$} (l1_t);
			\path (l1_t) edge node[above, locguard]{$\locguardof{\locij{\cmshaltname}{2}}$} node[below]{$\clockx = 0$} (qhalt);

			\path (l0_1) edge node[above, locguard]{$\locguardof{\locij{\cmshaltname}{\clockCMt}}$} node[below]{$\clockx = 0$} (l1_1);
			\path (l1_1) edge[bend right] node[above, locguard]{$\locguardof{\lochalt}$} node[below]{$\clockx = 0$} (qhalt);

			\path (l0_2) edge node[above, locguard]{$\locguardof{\locij{\cmshaltname}{1}}$} node[below]{$\clockx = 0$} (l1_2);
			\path (l1_2) edge[bend right] node[above right, locguard, xshift=1em]{$\locguardof{\lochalt}$} node[below,xshift=2em]{$\clockx = 0$} (qhalt);
		\end{tikzpicture}
	}

	\caption{Final gadget as a single \gPTA{}}
	\label{figure:2CM:final}
	\end{subfigure}

	\caption{Initial and final gadgets}
	\label{figure:2CM:initial}
\end{figure*}
\begin{lemma}\label{lemma:undecidability:reach-emptiness:integers}
	Reachability-emptiness is undecidable for PTAs with 3~clocks and 1~parameter.
\end{lemma}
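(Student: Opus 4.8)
The plan is to reduce from the halting problem for deterministic 2-counter machines. Given a 2CM $\calM$ with counters $\Counter_1,\Counter_2$, I would build a PTA $\PTA$ over exactly three clocks $\clockCMt,\clockCMcone,\clockCMctwo$ and the single parameter $\param$, using $\clockCMcone$ and $\clockCMctwo$ to encode the two counter values and $\clockCMt$ as an auxiliary/reference clock. An initial gadget---the edge guarded by $\clockCMt = \param \land \clockCMt > 1$ that resets all three clocks (\cref{figure:2CM:original:initial})---forces $\param$ to be an integer $\geq 2$ and installs the initial configuration with every clock at $0$. The value of $\param$ then fixes the time scale of the whole simulation and, in particular, the largest counter value the encoding can represent faithfully. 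The halting state $\cmshalt$ of $\calM$ is mapped to a distinguished location whose reachability we will analyse.

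Each instruction of $\calM$ is translated into a gadget that transforms the encoded configuration. Arriving at the start of a simulated step with $\clockCMt = 0$ and $\clockCMcone,\clockCMctwo$ holding the current counter values, the gadgets realise: (i) a zero-test on $\Counter_\ell$, via a guard comparing the corresponding clock to $0$; (ii) an increment of $\Counter_\ell$; and (iii) a decrement of $\Counter_\ell$, the latter two implemented by letting time elapse and resetting clocks at instants scaled by $\param$, in such a way that the value encoding $\Counter_{3-\ell}$ is restored unchanged at the end of the gadget. Invariants of the form $\clockCMcone \leq \param$ and $\clockCMctwo \leq \param$ keep each counter-clock within the representable range, so that along any run the simulation is exact precisely as long as both counters stay bounded by a value determined by $\param$.

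Correctness then splits as usual. If $\calM$ halts, its finite halting run uses counter values up to some bound $B$; choosing $\param$ large enough to represent $B$ makes $\PTA$ simulate this run faithfully and reach the location encoding $\cmshalt$, so the set of parameter valuations for which that location is reachable is nonempty. Conversely, if $\calM$ does not halt, then for no valuation of $\param$ can the simulation reach $\cmshalt$ (either the counters stay bounded and the faithful simulation loops forever without halting, or a counter eventually exceeds the representable range and the run gets stuck). Hence reachability-emptiness returns ``nonempty'' exactly when $\calM$ halts, and undecidability follows since 2CM halting is undecidable.

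The main obstacle is the design of the increment and decrement gadgets. With only three clocks, one of which is reserved as a reference, there is no room to copy a value, and clocks can only be reset to $0$ and compared---there is no clock subtraction. The crux is therefore to add or remove exactly one unit from one counter-clock while provably leaving the other counter-clock's encoded value intact, using nothing but delays, resets, and $\param$-dependent guards. This is also the one place where the parameter is indispensable: a non-parametric three-clock TA can distinguish only boundedly many counter values and hence cannot simulate an unbounded counter, whereas the freely chosen integer $\param$ supplies the unbounded resolution that the reduction requires.
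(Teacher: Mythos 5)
Your reduction follows exactly the paper's route: encode the two counters in clocks $\clockCMcone,\clockCMctwo$ with $\clockCMt$ as a reference (the invariant being ``when $\clockCMt=0$, the counter-clocks equal the counter values''), let the integer parameter $\param$ fix the time scale and the largest representable counter value, use the initial gadget $\clockCMt=\param\land\clockCMt>1$ to rule out $\param\leq 1$, and conclude with the same case split (halting run with counters bounded by $B$: pick $\param\geq B$; non-halting: every valuation either simulates a non-halting run faithfully or deadlocks when a counter outgrows $\param$). All of that is sound and matches the paper.

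However, there is a genuine gap at the point you yourself flag as ``the crux'': you never construct the increment/decrement gadget, you only state the property it must have (change one counter-clock by one unit while restoring the other). Since this is the only non-routine step of the whole reduction, the proof is incomplete without it. The standard resolution, which the paper uses, is: make the gadget last \emph{exactly} $\param$ time units, enforced by a final guard $\clockCMt=\param$ with reset of $\clockCMt$; inside the gadget, reset $\clockCMctwo$ at the unique instant where $\clockCMctwo=\param$ (so after the full $\param$-unit delay it returns to its entry value, \ie{} $\Counter_2$ is preserved) and reset $\clockCMcone$ at the unique instant where $\clockCMcone=\param-1$ (so after the $\param$-unit delay it reads one more than on entry, \ie{} $\Counter_1$ is incremented); decrement uses $\clockCMcone=\param+1$ instead. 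Because the two reset instants can occur in either order depending on whether $\countervalue_1\lessgtr\countervalue_2$, the gadget must branch into two paths covering both interleavings, each edge carrying an equality guard so that the timed run through the gadget is essentially forced. A smaller remark: you invoke invariants $\clockCMcone\leq\param$ to keep the counter-clocks ``in range'', but no invariants are needed --- the equality guards simply become unsatisfiable once a counter-clock exceeds its target, which deadlocks the run; the paper explicitly notes that this lemma holds without invariants, a fact that matters later when the construction is ported to networks.
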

\begin{proof}
	We reduce from the halting problem of 2-counter machines, which is undecidable~\cite{Minsky67}.
	Given a 2CM~$\calM$, we encode it as a PTA~$\PTA$.
   	Let us describe this encoding in detail, as we will modify it in the subsequent proofs.

	Each state $\cms_i$ of the machine is encoded as a location of the PTA, which we call~$\cmspta_i$.
	The counters are encoded using clocks $\clockCMt$, $\clockCMcone$ and $\clockCMctwo$ and one integer-valued parameter~$\param$, with the following relations with the values $\countervalue_1$ and~$\countervalue_2$ of counters $\Counter_1$ and~$\Counter_2$: when $\clockCMt = 0$, we have $\clockCMcone = \countervalue_1$ and $\clockCMctwo = \countervalue_2$.
	This clock encoding is classical for integer-valued parameters, \eg{} \cite{BBLS15,ALM20}.
	The parameter typically encodes the maximal value of the counters along a run.

	We initialize the clocks with the gadget in \cref{figure:2CM:original:initial} (that also blocks the case where $\param \leq 1$).
	Note that, throughout the paper, we highlight in thick green style the locations of the PTA corresponding to a state of the 2CM (in contrast with other locations added in the encoding to maintain the matching between the clock values and the counter values).
	Since all clocks are initially~0, in \cref{figure:2CM:original:initial} clearly, when in $\cmspta_0$ with $\clockCMt = 0$, we have $\clockCMcone = \clockCMctwo = 0$, which indeed corresponds to counter values~$0$.

	We now present the gadget encoding the increment instruction of~$\Counter_1$ in \cref{figure:2CM:original:increment}.
	The edge from $\cmspta_i$ to $\loci{i1}$ only serves to clearly indicate the entry in the increment gadget and is done in $0$ time unit.
	Since every edge is guarded by one equality, there are really only two timed paths that go through the gadget:
	one going through $\loci{i2}$ and one through $\loci{i2}'$, depending on the respective order between $\countervalue_1$ and~$\countervalue_2$.
	Observe that on both timed paths the gadget lasts exactly $\param$ time units (due to the guards and resets of~$\clockCMt$).
	In addition, $\clockCMctwo$ is reset exactly when it equals~$\param$, hence its value when entering the gadget is identical to its value when reaching~$\cmspta_j$.
	Therefore $\countervalue_2$ is unchanged.
	Now, $\clockCMcone$ is reset when it equals~$\param - 1$, hence its value after the gadget of duration~$\param$ is incremented by~1 compared to its value when entering the gadget.
	Therefore $\countervalue_1$ is incremented by~1 when reaching~$\cmspta_j$, as expected.

\begin{figure*}
	\begin{subfigure}[b]{.5\textwidth}
	\centering
	\scalebox{\generalFigsScaleFactor}{
	\begin{tikzpicture}[PTA, node distance=1.5cm]
		\node[location2CM] (si) {$\cmspta_i$};
		\node[location, right=of si] (l1) {$\loci{i1}$};
		\node[location, above right=of l1] (l2) {$\loci{i2}$};
		\node[location, below right=of l1] (l2p) {$\loci{i2}'$};
		\node[location, below right=of l2] (l3) {$\loci{i3}$};
		\node[location2CM, right=of l3] (sj) {$\cmspta_j$};

		\path[edge] (si) edge[] node {$\clockCMt = 0$} (l1);
		\path[edge] (l1) edge[] node[align=center] {$\clockCMctwo = \paramp$\\$\clockCMctwo \assign 0$} (l2);
		\path[edge] (l1) edge[] node[align=center,swap] {$\clockCMcone = \paramp - 1$\\ $\clockCMcone \assign 0$} (l2p);
		\path[edge] (l2) edge[] node[align=center] {$\clockCMcone = \paramp - 1$\\ $ \clockCMcone \assign 0$} (l3);
		\path[edge] (l2p) edge[] node[align=center,swap] {$\clockCMctwo = \paramp$\\ $ \clockCMctwo \assign 0$} (l3);
		\path[edge] (l3) edge[] node[align=center] {$\clockCMt= \paramp$\\ $ \clockCMt\assign 0$} (sj);
	\end{tikzpicture}
	}

	\caption{Increment~$\Counter_1$: single PTA version}
	\label{figure:2CM:original:increment}
	\end{subfigure}
	\begin{subfigure}[b]{.5\textwidth}
	\centering
	\scalebox{\generalFigsScaleFactor}{
	\begin{tikzpicture}[PTA, node distance=1.5cm]
		\node[location2CM] (si) {$\cmspta_i^{\clockCMt}$};
		\node[location, right=of si] (li) {$\locij{i}{\clockCMt}$};
		\node[location2CM, right=of li] (sj) {$\cmspta_j^{\clockCMt}$};

		\node[invariantNorth] at (si.north) {$\clockCMt = 0$};
		\node[invariantNorth] at (li.north) {$\clockCMt \leq \paramp$};
		\node[invariantNorth] at (sj.north) {$\clockCMt = 0$};

		\path[edge] (si) edge[] node {$\clockCMt = 0$} (li);
		\path[edge] (li) edge[] node[above] {$\clockCMt = \paramp$ } node[below]{ $ \clockCMt \assign 0$} (sj);
	\end{tikzpicture}
	}

	\caption{Increment~$\Counter_1$: gadget simulating~$\clockCMt$}
	\label{figure:2CM:increment:t}
	\end{subfigure}
	\begin{subfigure}[b]{.5\textwidth}
	\centering
	\scalebox{\generalFigsScaleFactor}{
	\begin{tikzpicture}[PTA, node distance=2.5cm]
		\node[location2CM] (si) {$\cmspta_i^1$};
		\node[location, xshift=-2em, right=of si] (li) {$\locij{i}{1}$};
		\node[location2CM, right=of li, xshift=-2em] (sj) {$\cmspta_j^1$};

		\node[invariantNorth] at (si.north) {$\clockCMcone \leq \paramp - 1$};
		\node[invariantNorth] at (li.north) {$\clockCMcone \leq \paramp$};
		\node[invariantNorth] at (sj.north) {$\cdots$}; %

		\path[edge] (si) edge[] node[above] {$\clockCMcone = \paramp - 1$ } node[below]{ $ \clockCMcone \assign 0$} (li);
		\path[edge] (li) edge[] node[above,locguard] {$\locguardof{\cmspta_j^{\clockCMt}}$ } node[below]{$\clockCMcone \leq \paramp$} (sj);
	\end{tikzpicture}
	}

	\caption{Increment~$\Counter_1$: gadget simulating~$\clockCMcone$}
	\label{figure:2CM:increment:x1}
	\end{subfigure}
	\begin{subfigure}[b]{.5\textwidth}
	\centering
	\scalebox{\generalFigsScaleFactor}{
	\begin{tikzpicture}[PTA, node distance=1.8cm]
		\node[location2CM] (si) {$\cmspta_i^2$};
		\node[location, right=of si] (li) {$\locij{i}{2}$};
		\node[location2CM, right=of li] (sj) {$\cmspta_j^2$};

		\node[invariantNorth] at (si.north) {$\clockCMctwo \leq \paramp$};
		\node[invariantNorth] at (li.north) {$\clockCMctwo \leq \paramp$};
		\node[invariantNorth] at (sj.north) {$\cdots$}; %

		\path[edge] (si) edge[] node[above] {$\clockCMctwo = \paramp$ } node[below]{ $ \clockCMctwo \assign 0$} (li);
		\path[edge] (li) edge[] node[above,locguard] {$\locguardof{\cmspta_j^{\clockCMt}}$ } node[below]{$\clockCMctwo \leq \paramp$} (sj);
	\end{tikzpicture}
	}

	\caption{Increment~$\Counter_1$: gadget simulating~$\clockCMctwo$}
	\label{figure:2CM:increment:x2}
	\end{subfigure}

	\caption{Increment gadget}
	\label{figure:2CM:increment}
\end{figure*}
\begin{figure*}[tb]
	\begin{subfigure}[b]{.5\textwidth}
	\centering
	\scalebox{\generalFigsScaleFactor}{
		\begin{tikzpicture}[PTA, node distance=1.3cm]
			\node[location2CM] (si) {$\cmspta_i$};
			\node[location, xshift=0, right=of si] (l1) {$\loci{i1}$};
			\node[location, above right=of l1] (l2) {$\loci{i2}$};
			\node[location, below right=of l1] (l2p) {$\loci{i2}'$};
			\node[location, below right=of l2, xshift=-.5em] (l3) {$\loci{i3}$};
			\node[location2CM, right=of l3] (sj) {$\cmspta_j$};
			\node[location, above=of si,yshift=-2em] (lk1) {$\loci{k1}$};
			\node[location, above=of lk1,yshift=-1.5em] (lk2) {$\loci{k2}$};
			\node[location2CM, right=of lk2] (sk) {$\cmspta_k$};

			\path[edge] (si) edge node {$\clockCMt = 0$} node[below]{$\land \clockCMcone > 0$} (l1);
			\path[edge] (l1) edge node[align=center] {$\clockCMctwo = \paramp$ \\ $ \clockCMctwo \assign 0$} (l2);
			\path[edge] (l1) edge node[swap,align=center] {$\clockCMcone = \paramp + 1$ \\ $ \clockCMcone \assign 0$} (l2p);
			\path[edge] (l2) edge node[align=center] {$\clockCMcone = \paramp + 1$ \\ $ \clockCMcone \assign 0$} (l3);
			\path[edge] (l2p) edge node[swap,align=center] {$\clockCMctwo = \paramp$ \\ $ \clockCMctwo \assign 0$} (l3);
			\path[edge] (l3) edge node[align=center, above] {$\clockCMt = \paramp$ } node[below]{$ \clockCMt\assign 0$} (sj);
			\path[edge] (si) edge node[align=center] {$\clockCMt = 0$ \\ $\land \clockCMcone = 0$} (lk1);
			\path[edge] (lk1) edge node[align=center] {$\clockCMctwo = \paramp$\\$\clockCMctwo \assign 0$} (lk2);
			\path[edge] (lk2) edge node[above] {$\clockCMt = \paramp$} node[below]{$\clockCMt, \clockCMcone \assign 0$} (sk);
		\end{tikzpicture}
	}

	\caption{0-test and decrement ($\Counter_1$): single PTA version}
	\label{figure:2CM:original:decrement}
	\end{subfigure}
	\hfill{}
	\begin{subfigure}[b]{.66\textwidth}
	\centering
	\scalebox{\generalFigsScaleFactor}{
	\begin{tikzpicture}[PTA, node distance=1.5cm]
		\node[location2CM] (si) {$\cmspta_i^{\clockCMt}$};
		\node[location, right=of si, xshift=1em] (li) {$\locij{i}{\clockCMt}$};
		\node[location2CM, right=of li] (sj) {$\cmspta_j^{\clockCMt}$};
		\node[location, above=of si, yshift=-2.5em] (lk) {$\locij{k}{\clockCMt}$};
		\node[location2CM, above=of lk, yshift=-2em] (sk) {$\cmspta_k^{\clockCMt}$};

		\node[invariantSouth] at (si.south) {$\clockCMt = 0$};
		\node[invariantNorth] at (li.north) {$\clockCMt \leq \paramp$};
		\node[invariantNorth] at (sj.north) {$\clockCMt = 0$};
		\node[invariantWest] at (lk.west) {$\clockCMt \leq \paramp$};
		\node[invariantNorth] at (sk.north) {$\clockCMt = 0$};

		\path[edge] (si) edge[] node[above,locguard]{$\locguardof{\locij{i1}{1}}$} node[below] {$\clockCMt = 0$}  (li);
		\path[edge] (li) edge[] node[above] {$\clockCMt = \paramp$ } node[below]{ $ \clockCMt \assign 0$} (sj);
		\path[edge] (si) edge[] node[align=center] {$\clockCMt = 0$} node[right,locguard,xshift=.2em]{$\locguardof{\locij{k}{1}}$}  (lk);
		\path[edge] (lk) edge[] node[align=center] {$\clockCMt = \paramp$\\$\clockCMt \assign 0$} (sk);
	\end{tikzpicture}
	}

	\caption{0-test and decrement ($\Counter_1$): gadget simulating~$\clockCMt$}
	\label{figure:2CM:decrement:t}
	\end{subfigure}
	\begin{subfigure}[b]{.5\textwidth}
	\centering
	\scalebox{\generalFigsScaleFactor}{
	\begin{tikzpicture}[PTA, node distance=1.6cm]
		\node[location2CM] (si) {$\cmspta_i^1$};
		\node[location, right=of si] (li1) {$\locij{i1}{1}$};
		\node[location, right=of li1] (li2) {$\locij{i2}{1}$};
		\node[location2CM, right=of li2] (sj) {$\cmspta_j^1$};
		\node[location, above=of si, yshift=-2.8em] (lk) {$\locij{k}{1}$};
		\node[location2CM, above=of lk, yshift=-2.2em] (sk) {$\cmspta_k^{1}$};

		\node[invariantSouth] at (si.south) {$\clockCMcone \leq \paramp + 1$};
		\node[invariantNorth] at (li1.north) {$\clockCMcone \leq \paramp + 1$};
		\node[invariantNorth] at (li2.north) {$\clockCMcone \leq \paramp$};
		\node[invariantNorth] at (sj.north) {$\cdots$}; %
		\node[invariantWest] at (lk.west) {$\clockCMcone \leq \paramp$};
		\node[invariantWest] at (sk.west) {$\cdots$}; %

		\path[edge] (si) edge[] node[above,locguard]{$\locguardof{\cmspta_i^{\clockCMt}}$ } node[below] {$\clockCMcone > 0$ } (li1);
		\path[edge] (li1) edge[] node[above] {$\clockCMcone = \paramp + 1$ } node[below]{ $ \clockCMcone \assign 0$} (li2);
		\path[edge] (li2) edge[] node[above,locguard] {$\locguardof{\cmspta_j^{\clockCMt}}$ } node[below]{$\clockCMcone \leq \paramp$} (sj);
		\path[edge] (si) edge[] node[left,align=center] {$\clockCMcone = 0$} node[right,locguard,xshift=.2em]{$\locguardof{\cmspta_i^{\clockCMt}}$ } (lk); %
		\path[edge] (lk) edge[] node[left,align=center] {$\clockCMcone = \paramp$\\$\clockCMcone \assign 0$} node[right,locguard,xshift=.2em]{$\locguardof{\cmspta_k^{\clockCMt}}$ } (sk);
	\end{tikzpicture}
	}

	\caption{0-test and decrement ($\Counter_1$): gadget simulating~$\clockCMcone$}
	\label{figure:2CM:decrement:x1}
	\end{subfigure}
	\begin{subfigure}[b]{.66\textwidth}
	\centering
	\scalebox{\generalFigsScaleFactor}{
	\begin{tikzpicture}[PTA, node distance=1.5cm]
		\node[location2CM] (si) {$\cmspta_i^2$};
		\node[location, right=of si, xshift=2em] (li) {$\locij{i}{2}$};
		\node[location2CM, right=of li] (sj) {$\cmspta_j^2$};
		\node[location, above=of si, yshift=-2em] (lk) {$\locij{k}{2}$};
		\node[location2CM, above=of lk, yshift=-2.5em] (sk) {$\cmspta_k^{2}$};

		\node[invariantSouth] at (si.south) {$\clockCMctwo \leq \paramp$};
		\node[invariantNorth] at (li.north) {$\clockCMctwo \leq \paramp$};
		\node[invariantNorth] at (sj.north) {$\cdots$}; %
		\node[invariantWest] at (lk.west) {$\clockCMctwo \leq \paramp$};
		\node[invariantWest] at (sk.west) {$\cdots$}; %

		\path[edge] (si) edge[] node[above,locguard,yshift=1em]{$\locguardof{\locij{i}{\clockCMt}}$} node[above] {$\clockCMctwo = \paramp$ } node[below]{ $ \clockCMctwo \assign 0$} (li);
		\path[edge] (li) edge[] node[above,locguard] {$\locguardof{\cmspta_j^{\clockCMt}}$ } node[below]{$\clockCMctwo \leq \paramp$} (sj);
		\path[edge] (si) edge[] node[left,align=center] {$\clockCMctwo = \paramp$\\$\clockCMctwo \assign 0$} node[right,locguard,xshift=.2em]{$\locguardof{\locij{k}{\clockCMt}}$} (lk);
		\path[edge] (lk) edge[] node[left] {$\clockCMctwo \leq \paramp$} node[right,locguard,xshift=.2em]{$\locguardof{\cmspta_k^{\clockCMt}}$ } (sk);
	\end{tikzpicture}
	}

	\caption{0-test and decrement ($\Counter_1$): gadget simulating~$\clockCMctwo$}
	\label{figure:2CM:decrement:x2}
	\end{subfigure}

	\caption{0-test and decrement gadget}
	\label{figure:2CM:decrement}
\end{figure*}

	Let us now consider the 0-test and decrement gadget.
	Decrement is done similarly to increment, by replacing guards $\clockCMcone = \param - 1$ with $\clockCMcone = \param + 1$, as shown in \cref{figure:2CM:original:decrement}.
    In addition, the 0-test is obtained by simply testing that $\clockCMcone = 0$ whenever $\clockCMt = 0$ (which ensures that~$\countervalue_1 = 0$), which is done on the guard from~$\cmspta_j$ to~$\loci{k1}$; we then force exactly $\param$ time units to elapse (and reset each clock when it reaches~$\param$), which means that the values of the clocks when leaving the gadget are identical to their value when entering.
    This is not strictly speaking needed here, but this time elapsing will simplify the proof of \cref{proposition:undecidability:3-processes-invariants}.
    Dually, the guard from $\cmspta_i$ to~$\loci{i1}$ ensures that decrement is done only when the counter is not null.

	All those gadgets also work for counter~$\Counter_2$ by swapping $\clockCMcone$ and~$\clockCMctwo$.

	The actions associated with the edges do not matter; we can assume a single action~$\action$ on all edges (omitted in all figures).

	We now prove that the machine halts iff there exists a parameter valuation~$\pval$ such that $\valuate{\PTA}{\pval}$ reaches location~$\lochalt$.
	First note that if $\param \leq 1$ the initial gadget cannot be passed, and so the machine does not halt.
	Assume $\param > 1$.
	Consider two cases:
	\begin{enumerate}
		\item either the value of the counters is not bounded (and the 2CM does not halt).
			Then, for any parameter valuation, at some point during an increment of, say, $\Counter_1$ we will have $\countervalue_1 > \param$ and hence $\clockCMcone > \param - 1$ when taking the edge from $\loci{i2}$ to $\loci{i3}$ and the PTA will be blocked.
			Therefore, there exists no parameter valuation for which the PTA can reach~$\lochalt$.
		\item or the value of the counters remains bounded.
			Let $\maxcounterval$ be their maximal value.
			Let us consider two subcases:
			\begin{enumerate}
				\item either the machine reaches $\cmshalt$: in that case, if $\maxcounterval \leq \param$, then the PTA valuated with such parameter valuations correctly simulates the machine, yielding a (unique) run reaching location~$\lochalt$.

				\item or the machine does not halt.
                    Then again, for a sufficiently large parameter valuation (\ie{} $\param > \maxcounterval$), the machine is properly simulated, and since the machine does not halt, then the PTA never reaches~$\lochalt$.
					For other values of~$\param$, the machine will block at some point in an increment gadget, because $\param$ is not large enough and the guard in the increment gadget cannot be satisfied.
			\end{enumerate}
	\end{enumerate}
	Hence the machine halts iff there exists a parameter valuation~$\pval$ such that $\valuate{\PTA}{\pval}$ reaches~$\lochalt$. 
\end{proof}
\begin{remark}
	Observe that the proof of \cref{lemma:undecidability:reach-emptiness:integers} does not require invariants, so undecidability holds without invariants as well---a result known since~\cite{AHV93}.
\end{remark}

We now prove undecidability of the reachability-emptiness in NTA with exactly 3~processes, by rewriting the 2CM encoding from the former proof.

\begin{restatable}{proposition}{propositionUndecThreeInv}\label{proposition:undecidability:3-processes-invariants}
	Assume a \gPTA{} $\PTA$ with invariants, with a single clock and a single parameter, and $\loc$ one of its locations.
	Deciding the emptiness of the set of timing parameter valuations~$\pval$ for which $\loc$ is reachable in the NTA $\network{\valuate{\PTA}{\pval}}{3}$ is undecidable, for all such~$\PTA$.
\end{restatable}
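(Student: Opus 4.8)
The plan is to reduce from the halting problem for 2-counter machines, reusing the encoding built in the proof of \cref{lemma:undecidability:reach-emptiness:integers} but \emph{distributing} its three clocks $\clockCMt, \clockCMcone, \clockCMctwo$ over three copies of a single 1-clock, 1-parameter \gPTA{}~$\PTA$. The guiding idea is that, since a delay transition of an NTA advances the clocks of all processes at the same rate, the single clock of each process can faithfully track one of the three clocks of the original PTA: in a well-behaved run of $\network{\valuate{\PTA}{\pval}}{3}$, one process is the $\clockCMt$-process, one the $\clockCMcone$-process and one the $\clockCMctwo$-process, and the invariant ``when $\clockCMt = 0$ we have $\clockCMcone = \countervalue_1$ and $\clockCMctwo = \countervalue_2$'' is maintained globally exactly as in the single-PTA encoding. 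Reaching $\lochalt$ in the network will correspond to reaching $\lochalt$ in the original PTA, so undecidability transfers.

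Concretely, I would split each gadget of \cref{lemma:undecidability:reach-emptiness:integers} (initialisation, increment, 0-test-and-decrement) into three \emph{tracks}, one per clock, and then merge all three tracks into the single location set of~$\PTA$, tagging locations with a superscript ($\clockCMt$, $1$, $2$) recording their track. This is exactly the decomposition displayed in \cref{figure:2CM:initial:t,figure:2CM:initial:x1,figure:2CM:initial:x2}, fused into one \gPTA{} in \cref{figure:2CM:initial:full}, and likewise for \cref{figure:2CM:increment,figure:2CM:decrement} and the final gadget \cref{figure:2CM:final}. The \emph{location guards} then serve two purposes at once. First, \textbf{role assignment}: all three processes start in the common location $\locij{0}{\clockCMt}$, from which only the $\clockCMt$-track edge carries no location guard, the $\clockCMcone$-track edge is guarded by a $\clockCMt$-track location, and the $\clockCMctwo$-track edge by a $\clockCMcone$-track location; hence with exactly three processes the first to move must become the $\clockCMt$-process, the next the $\clockCMcone$-process and the last the $\clockCMctwo$-process, forcing a bijection between processes and roles. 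Second, \textbf{step synchronisation}: inside each gadget the edge advancing one track to the next 2CM-state location is guarded by the corresponding location on another track, forming an acyclic ``daisy chain'' $\clockCMt \to \clockCMcone \to \clockCMctwo$ that makes the three processes move in lockstep, emulating the single control location of the original PTA.

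The \emph{invariants} carry the timing discipline that was previously enforced by shared clocks in a single automaton. Invariants of the form $\clock \leq \paramp$ on the ``waiting'' locations, together with the guard $\clock = \paramp$, impose the $\paramp$-long delays of each gadget, while invariants $\clock = 0$ on the synchronisation locations force a process to leave in zero time; this guarantees that the clock resets performed by the three processes all occur at the same global instant, so that no drift accumulates between the simulated clocks. This is precisely the point where invariants are essential: without the $\clock = 0$ invariants a process could linger in a synchronisation location, letting the three simulated clocks desynchronise and breaking the encoding. (For a fixed number of processes invariants are, strictly speaking, replaceable, but keeping them here streamlines the step towards the unbounded case.)

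Finally, for correctness I would show that $\network{\valuate{\PTA}{\pval}}{3}$ reaches $\lochalt$ iff the three-clock, one-parameter PTA of \cref{lemma:undecidability:reach-emptiness:integers} reaches $\lochalt$ under the same~$\pval$, which by that lemma holds for some~$\pval$ iff the machine halts. The forward-simulation direction is routine: one replays the unique faithful run of the single PTA, assigning the three processes to the three roles and interleaving their moves as dictated by the daisy chain and the $\clock=0$ invariants. The main obstacle is the converse, namely proving that \emph{every} run reaching $\lochalt$ is faithful. Here I must rule out ``cheating'' runs in which two processes enter the same track, a process takes a wrong role, or the lockstep desynchronises. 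The argument is an inductive structural invariant on reachable network configurations: one checks that any deviation from the intended role assignment leaves some track unable to satisfy its location guard (e.g.\ no process can ever reach $\locij{1}{2}$, so the $\clockCMt$-process can never leave $\locij{1}{\clockCMt}$), while the $\clock=0$ invariant simultaneously freezes time, so the configuration deadlocks before $\lochalt$. Establishing this ``deviations deadlock'' invariant carefully is the crux of the proof; combined with \cref{lemma:undecidability:reach-emptiness:integers}, it yields undecidability for the network of exactly three processes.
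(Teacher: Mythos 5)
Your proposal follows essentially the same route as the paper's proof: it distributes the three clocks of the 2CM encoding from \cref{lemma:undecidability:reach-emptiness:integers} over three single-clock processes, uses the initial gadget's daisy-chained location guards for role assignment, uses in-gadget location guards plus $\clockx = 0$ invariants to force lockstep synchronisation, and identifies the converse direction (ruling out unfaithful runs) as the crux. The construction, the role of invariants, and the correctness argument all match the paper's \cref{figure:2CM:initial:full,figure:2CM:increment,figure:2CM:decrement,figure:2CM:final} and the surrounding reasoning.
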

\begin{proof}
	The proof main technicality is to rewrite the 2CM encoding of \cref{lemma:undecidability:reach-emptiness:integers} (made of a PTA with 3~clocks) using 3 processes with 1~clock each.
	This is not trivial as the communication model between our \gPTAs{} is rather weak.
	Recall that the parameter encodes typically the maximal valuation of the counters, and can therefore be arbitrarily large; we can assume without loss of generality that $\param > 1$.
	Here, we use three different \gPTAs{} (each featuring a single clock) synchronizing together; of course, strictly speaking, we can use only a single structure, but with three ``subparts'', and we ensure that each of the 3~processes will go into a different subpart of the \gPTA{}.
	This is ensured by the initial gadget in \cref{figure:2CM:initial:full}: in order for process~1 to reach location~$\cmspta_0^{\clockCMt}$, then, due to the location guards, exactly one other process must have selected the second subpart (location~$\locij{1}{1}$) while the third process must have selected the third subpart (location~$\locij{1}{2}$).

	Each instruction of the 2CM is encoded into a gadget connected with each other, for each of the three subparts.

	In the rest of the proof, we therefore give, for each instruction of the 2CM, one gadget for each of the three subparts.
	Obviously, the (unique) \gPTA{} features a single clock (``$\clockx$'' in \cref{figure:2CM:initial:full});  however, for sake of readability, and for consistency with the encoding given in \cref{lemma:undecidability:reach-emptiness:integers}, we use clocks with different names in the 3 subparts: that is, the initial gadget in \cref{figure:2CM:initial:full} can be described by three gadgets for the three subparts, given in \cref{figure:2CM:initial:t,figure:2CM:initial:x1,figure:2CM:initial:x2}, with clock names $\clockCMt$, $\clockCMcone$ and~$\clockCMctwo$.
	Also recall that from \cref{definition:gPTA}, clocks are not shared---they cannot be read from another process.

	Let us consider the increment gadget: the decomposition into three ``subparts'' (for each of the three processes) is given in \cref{figure:2CM:increment:t,figure:2CM:increment:x1,figure:2CM:increment:x2}.
	An invariant ``$\cdots$'' denotes the fact that the actual invariant is given in the ``next'' gadget starting from that location.
	Note that it takes exactly $\param$ time units to move from~$\cmspta_i^{\clockCMt}$ to~$\cmspta_j^{\clockCMt}$.
	In addition, the outgoing transition from a given~$\cmspta_i^{\clockCMt}$
	is always done in 0-time, which is enforced by the invariant $\clockCMt = 0$;
	therefore, the location guard $\locguardof{\cmspta_j^{\clockCMt}}$ together with the guard $\clockCMctwo \leq \param$ is exactly equivalent to $\clockCMt = \param \land \clockCMctwo \leq \param$, forcing the subparts simulating~$\clockCMcone$ and $\clockCMctwo$ to properly synchronize with the subpart simulating~$\clockCMt$ (or getting stuck forever in $\locij{i}{1}$ or~$\locij{i}{2}$ and blocking the whole computation).
	Therefore, %
	all three gadgets synchronize as expected, simulating the original increment gadget in \cref{figure:2CM:original:increment}.

	Let us now consider the 0-test and decrement gadget: the decomposition into three ``subparts'' (for each of the three processes) is given in \cref{figure:2CM:decrement:t,figure:2CM:decrement:x1,figure:2CM:decrement:x2}.
	This time, compared to the former gadgets, we need a better synchronization between gadgets to ensure the correct branching depending on the value of~$\countervalue_1$.
	If $\countervalue_1 = 0$, then $\clockCMcone = 0$ when $\clockCMt = 0$, and therefore the process simulating~$\clockCMcone$ can move in 0-time to~$\locij{k}{1}$, since~$\cmspta_i^{\clockCMt}$ is occupied;
	then, again in 0-time, the process simulating~$\clockCMt$ can move to~$\locij{k}{\clockCMt}$, ensuring the correct synchronization between both processes in the correct branch.
	Note that, because of the invariant in~$\cmspta_i^{\clockCMt}$, the process simulating~$\clockCMt$ can only stay 0 time unit in~$\cmspta_i^{\clockCMt}$, and therefore in the process simulating~$\clockCMcone$ the guard $\clockCMcone = 0$ together with location guard~$\locguardof{\cmspta_j^{\clockCMt}}$ is exactly equivalent to~$\clockCMcone = 0 \land \clockCMt = 0$, which correctly encodes the 0-test.
	Conversely, if $\countervalue_1 > 0$, then the process simulating~$\clockCMcone$ can move in 0-time to~$\locij{i1}{1}$, since~$\cmspta_i^{\clockCMt}$ is occupied;
	then, again in 0-time, the process simulating~$\clockCMt$ can move to~$\locij{i}{\clockCMt}$.
	Similarly, the process simulating~$\clockCMctwo$ follows the right branch thanks to location guards $\locguardof{\locij{i}{\clockCMt}}$ and~$\locguardof{\locij{k}{\clockCMt}}$.
	The rest of the decrement part (from~$i$ to~$j$) works similarly to the increment gadget, and the correctness argument is the same.
    Recall that the rest of the 0-test gadget forces time to elapse for $\param$~units, without modifying the value of the three clocks when leaving the gadget (except of course for $\clockCMcone$ in case of decrement).
	In particular, in the gadget simulating~$\clockCMctwo$ (\cref{figure:2CM:decrement:x2}), $\clockCMctwo$~has exactly the same value when entering~$\cmspta_k^{2}$ as when entering $\cmspta_i^{2}$ due to location guard~$\locguardof{\cmspta_k^{\clockCMt}}$, and the intermediate location resetting~$\clockCMctwo$ when it is exactly~$\param$;
	the same holds in the decrement part of the gadget.

	In the encoding in the proof of \cref{lemma:undecidability:reach-emptiness:integers}, we showed that the 2CM reaches~$\cmshalt$ iff the location~$\lochalt$ is reachable.
	Here, due to the split of the encoding into three subparts, we need a final gadget, given in \cref{figure:2CM:final} (we assume the clock is reset when entering locations $\cmspta_{\cmshaltname}^{\clockCMt}$, $\cmspta_{\cmshaltname}^{1}$ and~$\cmspta_{\cmshaltname}^{2}$).
	In order for the first process to reach the (new) location~$\lochalt$, we need all three processes to reach their respective final location (\ie{} $\cmspta_{\cmshaltname}^{\clockCMt}$, $\cmspta_{\cmshaltname}^{1}$ and~$\cmspta_{\cmshaltname}^{2}$) together, which is easily achieved thanks to the various location guards done in 0-time.
	(Note that this gadget actually allows \emph{all} processes to reach~$\cmshalt$; so far, we only need one process to do so.)

	Now, first note that the subpart simulating~$\clockCMt$ progresses in its gadgets, only synchronizing with the two other processes in the initial and final gadgets, as well as in the 0-test and decrement.
	In contrast, the two subparts simulating~$\clockCMcone$ and~$\clockCMctwo$ constantly synchronize with the location guards from the first subpart; while nothing forces them to take these transitions (notably those guarded by $\locguardof{\cmspta_j^{\clockCMt}}$), failing in taking such a transition will immediately lead to a timelock due to the invariants and to the fact that the next time such a location guard will be available is necessarily in $\param$ time units.
	More in detail, recall that any guard location (location used in a location guard)~$\cmspta_i^{\clockCMt}$ in the subpart simulating~$\clockCMt$ has an invariant $\clockCMt = 0$, and only outgoing transitions guarded with $\clockCMt = 0$.
	In addition, at least $\param$ time units must elapse between two guard locations in the subpart simulating~$\clockCMt$: so, due to the invariants in the two other subparts, failing to take a location guard renders impossible to take it the next time the first subpart will be in a guard location.

	For all these reasons, if one process reaches~$\lochalt$, then from \cref{figure:2CM:final}, two other processes must have traversed the two other subparts (simulating~$\clockCMcone$ and~$\clockCMctwo$) correctly, by synchronizing with the first process.
	Hence, the 2CM is correctly simulated.

	The rest of the reasoning follows the proof of \cref{lemma:undecidability:reach-emptiness:integers}:
	the 2CM halts iff there exists a parameter valuation~$\pval$ such that $\network{\valuate{\PTA}{\pval}}{3}$ reaches~$\lochalt$.
\end{proof}

We show that the absence of invariants does not avoid undecidability, for exactly 3~processes.

\begin{restatable}{proposition}{propositionUndecThreeNoInv}\label{proposition:undecidability:3-processes-noinvariants}
	Assume a \gPTA{} $\PTA$ with a single clock and a single parameter, and $\loc$ one of its locations.
	Deciding the emptiness of the set of timing parameter valuations~$\pval$ for which $\loc$ is reachable in the NTA $\network{\valuate{\PTA}{\pval}}{3}$ is undecidable, even without invariants, for all such~$\PTA$.
\end{restatable}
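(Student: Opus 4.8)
The plan is to reuse \emph{verbatim} the three-subpart \gPTA{} built in the proof of \cref{proposition:undecidability:3-processes-invariants}, keeping every location, edge, clock guard and location guard, and to show that \emph{all} of its invariants can be deleted (replaced by $\CTrue$) without affecting the reachability of $\lochalt$ in $\network{\valuate{\PTA}{\pval}}{3}$. The observation that justifies this is that every invariant of that construction has one of two shapes: an ``instantaneous'' constraint $\clockCMt = 0$ (resp.\ $\clockCMcone = 0$, $\clockCMctwo = 0$) on a synchronisation location, or an upper bound of the form $\cdot \leq \param$, $\cdot \leq \param - 1$ or $\cdot \leq \param + 1$ on a ``travelling'' location; and in \emph{both} cases every outgoing edge of the location already carries a matching clock guard (either $\clockCMt = 0$, or the equality that ends the travel at the intended value). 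Deleting the invariant therefore never removes the faithful behaviour: it only \emph{additionally} permits a process to let time elapse and thereby make its own future (equality) guard unsatisfiable, i.e.\ to get \emph{stuck}.

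For the completeness direction nothing changes: the unique faithful run described in the proof of \cref{proposition:undecidability:3-processes-invariants} never violated an invariant, so it is still a run of the invariant-free network and still reaches $\lochalt$ whenever $\pval(\param) \geq \maxcounterval$.

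The soundness direction is where the argument must be redone, since without invariants we can no longer invoke timelocks to forbid cheating; instead I would show directly that \emph{every} run reaching $\lochalt$ is a faithful simulation. The subpart simulating $\clockCMt$ acts as a rigid heartbeat: at each guard location $\cmspta_i^{\clockCMt}$ its only outgoing edges are guarded by $\clockCMt = 0$, so any positive delay there makes the edge forever unusable, while on each travelling location $\locij{i}{\clockCMt}$ the exit guard is $\clockCMt = \param$, so the only non-stuck behaviour is to spend exactly $\param$ time units there; since every gadget lasts exactly $\param$ time units, consecutive guard locations are visited exactly $\param$ apart and each is occupied for $0$ time. The two slave subparts can fire their edges guarded by a location of the $\clockCMt$-subpart only inside these zero-width windows, which pins down both the instants at which they move and, since exactly $\param$ units elapse between consecutive windows, the clock values $\clockCMcone, \clockCMctwo$ they carry; any deviation (a missed reset, or a delay past an equality guard) again yields an unsatisfiable guard and a stuck process. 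Because the final gadget of \cref{figure:2CM:final} lets a process enter $\lochalt$ only when all three processes simultaneously occupy $\cmspta_{\cmshaltname}^{\clockCMt}$, $\cmspta_{\cmshaltname}^{1}$ and $\cmspta_{\cmshaltname}^{2}$, a run reaching $\lochalt$ cannot contain a stuck process, hence all three subparts progress faithfully and the simulated 2CM halts, exactly as in \cref{lemma:undecidability:reach-emptiness:integers}.

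The main obstacle I anticipate is a phenomenon absent when invariants were present: a process that gets stuck at an intermediate location now \emph{lingers there and keeps that location's guard available forever}, whereas previously the invariants bounded the time spent at such locations. I would therefore have to verify that no edge can exploit a persistently available slave location guard to fire at an unintended instant. The key point is that all \emph{cross}-synchronisations ultimately pass through the $\clockCMt$-heartbeat, whose guard locations are available only during zero-width windows at the fixed instants $k\param$; since the master visits each of its guard locations exactly once and for $0$ time, no amount of lingering by a slave can enlarge, duplicate or shift a window, so every master-referencing edge stays confined to its intended instant. Checking this, case by case, for the increment (\cref{figure:2CM:increment:x1,figure:2CM:increment:x2}) and $0$-test/decrement (\cref{figure:2CM:decrement:t,figure:2CM:decrement:x1,figure:2CM:decrement:x2}) gadgets is the routine but necessary core of the proof.
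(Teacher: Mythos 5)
Your proposal is correct and follows essentially the same route as the paper's own proof: reuse the three-subpart construction of \cref{proposition:undecidability:3-processes-invariants} with every invariant erased, observe that the faithful run survives, and argue that any deviation leaves some process permanently stuck at an equality or zero-width-window guard, which makes $\lochalt$ unreachable because the final gadget of \cref{figure:2CM:final} forces all three processes to synchronise. Your explicit treatment of lingering processes is a welcome refinement of the paper's terser argument; only note that the claim that all cross-synchronisations pass through the $\clockCMt$-heartbeat is not literally true (the master's branching in the $0$-test gadget is itself guarded by the slave locations $\locguardof{\locij{i1}{1}}$ and $\locguardof{\locij{k}{1}}$, and the initial and final gadgets contain slave-to-slave guards), though the case-by-case check you announce would cover these.
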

\begin{proof}[Proof sketch]
	The idea is that, if a process ``chooses'' to not take some outgoing transition in the absence of invariants, then the whole computation is stuck, and $\lochalt$ is unreachable.
	Put it differently, while the absence of invariants may add some runs, none of these runs can reach~$\lochalt$, and the final correctness argument of the proof of \cref{proposition:undecidability:3-processes-invariants} still holds.
	See \cref{appendix:proof:proposition:undecidability:3-processes-noinvariants}.
\end{proof}
\subsubsection{Undecidability for any fixed number of processes (with or without invariants)}\label{section:undecidability:fixed}

By modifying the constructions in the proof of \cref{proposition:undecidability:3-processes-noinvariants}, it is easy to show that, \emph{for any fixed number of processes $n \geq 3$}, the reachability-emptiness problem is undecidable, even without invariants.
Note that our construction differs for any $n \geq 3$, and hence this does not prove the undecidability of the parametrised-reachability-emptiness problem (which is actually \emph{decidable} without invariants, see \cref{theorem:decidability:no-invariant}).

\begin{restatable}{proposition}{propositionTwoCMUndecFixed}\label{proposition:2CM:fixedn}
	Let $\PTA$ be a \gPTA{}, and $\loc$ one of its locations.
	For any~$n \geq 3$, deciding the emptiness of the set of timing parameter valuations~$\pval$ for which $\loc$ is reachable in the NTA~$\network{\valuate{\PTA}{\pval}}{n}$ is undecidable, even when $\PTA$ contains a single clock and a single parameter without invariants.
\end{restatable}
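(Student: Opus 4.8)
The plan is to reduce from the halting problem of $2$-counter machines, reusing the single-clock, single-parameter encoding underlying \cref{proposition:undecidability:3-processes-noinvariants} and generalizing its implicit ``three roles for three processes'' forcing into ``$n$ roles for $n$ processes''. Fix $n \geq 3$. From a $2$CM~$\calM$ I build a single \gPTA{}~$\PTA$ (without invariants) that consists of the three simulating subparts of \cref{proposition:undecidability:3-processes-noinvariants} (those tracking $\clockCMt$, $\clockCMcone$, $\clockCMctwo$), together with $n-3$ fresh \emph{inert sink roles}. The target is~$\lochalt$, and the whole point of the construction is to make $\lochalt$ reachable in $\network{\valuate{\PTA}{\pval}}{n}$ for some~$\pval$ \emph{only} when all $n$ roles are simultaneously ``present''. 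With exactly $n$ processes this forces, by a pigeonhole argument, a bijection between the $n$ processes and the $n$ roles, so that each active subpart is inhabited by exactly one process; this eliminates all interference and reduces correctness to the already-proved three-process case.

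For the construction I would extend the initial chain of \cref{figure:2CM:initial:full}: after the three active roles are entered (the $\clockCMt$-role by elapsing $\param$ time units, then the $\clockCMcone$-role via $\locguardof{\locij{1}{\clockCMt}}$, then the $\clockCMctwo$-role), a process may enter a first sink~$d_1$ via a location guard on the last active role, and then each $d_{j}$ via $\locguardof{d_{j-1}}$. A process entering $d_j$ simply idles there (possible since there are no invariants), and the $d_j$ are fresh, so they never occur as location guards inside the increment/decrement gadgets (\cref{figure:2CM:increment,figure:2CM:decrement}) and cannot perturb the simulation. The three active subparts are copied verbatim from \cref{proposition:undecidability:3-processes-noinvariants}. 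The crucial modification is the final gadget: since every edge of a \gPTA{} carries a \emph{single} location guard, I encode ``all $n$ roles present'' as a $0$-delay chain $c_0 \to c_1 \to \cdots \to c_{n-1} = \lochalt$ whose successive location guards check the $n-1$ non-$t$ roles, where each checked role ends in a \emph{dead-end witness location} (the two non-$t$ active halting locations and the $n-3$ sinks). Because witnesses are dead-ends, once reached they stay occupied, so the chain sees a single, simultaneous configuration while the distinguished $t$-process walks it to~$\lochalt$.

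Correctness then splits as usual. If $\calM$ halts, I exhibit the intended run of $\network{\valuate{\PTA}{\pval}}{n}$ (for $\param$ exceeding the maximal counter value): three processes take the active roles and faithfully simulate~$\calM$ exactly as in \cref{proposition:undecidability:3-processes-noinvariants}, the remaining $n-3$ processes walk down the sink chain and wait, and—since disjunctive location guards are monotone—these extras never block anyone; the final chain then delivers~$\lochalt$. For the converse, consider any run reaching~$\lochalt$ and look at the instant the chain starts: the $t$-process sits in its halting location and the $n-1$ witness locations are all occupied, i.e.\ $n$ distinct locations are occupied by exactly $n$ processes, forcing a bijection. As each process is committed to its subpart (subparts are linked only through location guards, never through edges), exactly one process inhabits each active subpart throughout the run, so there is no interference; the run therefore restricts to a faithful three-process simulation, and by the correctness of \cref{proposition:undecidability:3-processes-noinvariants} this is possible only if~$\calM$ halts.

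The main obstacle is making this pigeonhole forcing airtight despite the two features that make disjunctive networks delicate. Location guards are \emph{single}, so requiring $n$ simultaneous occupancies must be simulated by a $0$-delay chain, and one must ensure the checked locations really hold at one instant—this is exactly what the dead-end witnesses guarantee in the absence of invariants. Location guards are also \emph{monotone}: extra processes can only \emph{enable} transitions, so the only dangerous scenario is two processes crowding one active subpart (polluting its simulation) while some other role stays empty; but such a configuration can never satisfy the final chain and thus never reaches~$\lochalt$. Once this is established, the ``extra processes'' difficulty collapses back onto the three-process faithfulness of \cref{proposition:undecidability:3-processes-noinvariants}, which is why the result follows easily; note that the construction genuinely depends on~$n$ (through the number of sink roles and the length of the final chain), so it does not yield undecidability of the \emph{parametrised} problem.
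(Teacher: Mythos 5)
Your proposal is correct and follows essentially the same route as the paper: add $n-3$ fresh sink locations entered via a chain of location guards, and use a pigeonhole argument over the $n$ processes and $n$ roles to reduce interference-freeness to the already-proved three-process case. The only difference is where the occupancy check sits—the paper places it at the \emph{start} (the $\clockCMt$-process may enter $\cmspta_0^{\clockCMt}$ only under the location guard $\locguardof{\locij{s}{n-1}}$, so all extras are parked before the simulation begins), whereas you place it at the \emph{end} via a $0$-delay chain of dead-end witnesses before $\lochalt$; both enforce the same bijection on any run reaching the target.
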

\begin{proof}[Proof sketch]
	The idea is to reuse the construction of the proof of \cref{proposition:undecidability:3-processes-noinvariants}, by ``killing'' any process except~3 by sending them to ``sink'' locations.
	See \cref{appendix:proof:proposition:2CM:fixedn}.
\end{proof}
\subsection{Parameterized reachability-emptiness with invariants}\label{ss:undecidability-local-invariants}
Let us now consider the parametrised reachability-emptiness problem in \PDTNs{}.
In the absence of invariants, the 2CM encoding used in the previous proofs would become incorrect for a parametric number of processes: recall that we avoided the use of invariants by letting the system get stuck if a transition is not taken at the ``right'' time.
However, in a system with a parametric number of processes, a single process that gets stuck will not cause the whole computation to get stuck, since another process may have taken the transition at the ``right'' time.
Moreover, a process that gets stuck will ruin the synchronization that guarantees simulation of the~2CM, since now taking a transition with location guard~$\locguardof{\cmspta_i^{\clockCMt}}$ will be possible at any time after the process gets stuck in~$\cmspta_i^{\clockCMt}$.
We now prove that undecidability holds however \emph{with} invariants.

\begin{restatable}{theorem}{theoremUndecInv}\label{theorem:undecidability-local-invariants}
	PR-emptiness is undecidable for general \PDTNs{}, even with a single clock and a single parameter (with invariants).
\end{restatable}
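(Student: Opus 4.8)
The plan is to reduce from the halting problem of $2$-counter machines by reusing \emph{verbatim} the \gPTA{}~$\PTA$ produced in the proof of \cref{proposition:undecidability:3-processes-invariants}: it has a single clock, a single parameter~$\param$, invariants, and three ``subparts'' simulating the three clocks $\clockCMt, \clockCMcone, \clockCMctwo$ of the encoding of \cref{lemma:undecidability:reach-emptiness:integers}, together with the final location~$\lochalt$. I would prove that the machine halts if and only if there exist a parameter valuation~$\pval$ and a number~$n$ of processes such that $\lochalt$ is reachable in $\network{(\valuate{\PTA}{\pval})}{n}$. Since PR-emptiness asks whether the set of such valuations~$\pval$ is empty, and that set is nonempty precisely when the machine halts, undecidability transfers. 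The forward implication is immediate: if the machine halts, \cref{proposition:undecidability:3-processes-invariants} already yields a valuation~$\pval$ for which $\lochalt$ is reachable with exactly~$3$ processes, and $3$ is an admissible network size, so $\lochalt$ is reachable in the \PDTN{}.

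The real work is the converse, where I must show that surplus processes cannot ``cheat'': a run reaching~$\lochalt$ can exist for \emph{some} number~$n$ of processes only if the machine halts. The key is that the invariants rigidify the timing enough to keep all processes in lockstep. First I would argue the synchronous start: every process begins in~$\locij{0}{\clockCMt}$ with invariant $\clockx \leq \param$ and may leave only at $\clockx = \param$, while entering the $\clockCMcone$- or $\clockCMctwo$-subpart requires the location guards $\locguardof{\locij{1}{\clockCMt}}$ and $\locguardof{\locij{1}{1}}$, which---because the intermediate locations carry invariant $\clockx = 0$---are available only during the instantaneous cascade triggered at global time~$\param$. Hence, for time to advance past~$\param$, \emph{all} processes must simultaneously move into one of $\cmspta_0^{\clockCMt}, \cmspta_0^1, \cmspta_0^2$ with clock~$0$; the three subparts are thus entered synchronously and, within each subpart, all processes share the same location and clock value. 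I would then propagate this lockstep by induction along the simulation: the invariant $\clockCMt = 0$ at the control locations (and the upper-bound invariants at the intermediate ones) removes all timing freedom, so the processes of one subpart fire the same edges at the same instants, and at each $0$-test the branch is fixed by the shared clock value ($\clockCMcone = 0$ versus $\clockCMcone > 0$), so a subpart cannot split. Consequently, no two conflicting location guards are ever simultaneously enabled---\eg{} in the decrement gadget exactly one of $\locguardof{\locij{i1}{1}}$ and $\locguardof{\locij{k}{1}}$ is occupied---so each run faithfully simulates the deterministic $2$CM exactly as in \cref{lemma:undecidability:reach-emptiness:integers,proposition:undecidability:3-processes-invariants}, and the final gadget of \cref{figure:2CM:final} can deliver a process to~$\lochalt$ only once all three subparts have reached their halting locations, \ie{} only if the machine halts.

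The hard part will be exactly this lockstep invariant for an unbounded~$n$: ruling out that desynchronized extra processes make two branch-guards of a $0$-test available at once, which would let some process take the wrong branch and possibly reach~$\lochalt$ without a genuine halting computation. This is precisely the failure that the invariants suppress---and, as the discussion preceding the theorem makes clear, cannot be suppressed once invariants are removed---so carefully establishing that the invariants force both the synchronous start and its inductive preservation is the crux of the argument.
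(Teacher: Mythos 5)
Your proposal is correct and follows essentially the same route as the paper: reuse the three-subpart construction of \cref{proposition:undecidability:3-processes-invariants} unchanged, and show that the invariants force every surplus process to mimic one of the three required processes (your ``lockstep'' formulation), with the 0-test branching handled by the same observation that the shared clock value fixes the branch so no two conflicting location guards are ever enabled at once. The paper's appendix proof phrases this as each extra process being ``forced to follow'' the process of its subpart via tight guards and deterministic location guards, which is the same inductive argument you identify as the crux.
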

\begin{proof}[Proof sketch]
	We rely on the construction given in the proof of \cref{proposition:undecidability:3-processes-invariants}, and we show that any additional processes will just simulate one of the existing three processes, due to the presence of invariants.
	Most importantly, no process can remain ``idle'' in a location forever, and therefore the location guards still guarantee that the 2CM is simulated correctly.
	See \cref{appendix:proof:theorem:undecidability-local-invariants}.
\end{proof}
\subsection{Parameterized global reachability-emptiness with or without invariants}\label{ss:undecidability-global-noinvariants}

We show here that our former constructions can get rid of invariants providing we consider \emph{global} properties.
That is, without invariants, global properties make parametrised reachability-emptiness undecidable while it is decidable for local properties (see \cref{section:decidability:noinvariants}).

\begin{restatable}{theorem}{theoremUndecGlobal}\label{theorem:undecidability-global-noinvariants}
	PGR-emptiness is undecidable for general \PDTNs{}, even with a single clock and a single parameter, with or without invariants.
\end{restatable}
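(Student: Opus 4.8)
The plan is to split the statement into its two halves. \textbf{With invariants} there is nothing genuinely new to do: control-state reachability, \ie the global property $\varphi = (\#\loc \geq 1)$, is a legal global reachability property, and for this particular $\varphi$ the PGR-emptiness problem coincides with PR-emptiness. Hence undecidability of PR-emptiness with invariants (\cref{theorem:undecidability-local-invariants}) already gives undecidability of PGR-emptiness with invariants, for a single clock and a single parameter. The substance of the theorem is therefore the \textbf{invariant-free} case, which is interesting precisely because PR-emptiness (local reachability) is \emph{decidable} there by \cref{theorem:decidability:no-invariant}.

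For the invariant-free case I would reduce once more from the halting problem of $2$-counter machines, reusing the gadget encoding of \cref{proposition:undecidability:3-processes-invariants}, but taking as target the \emph{global} (\textsc{target}) property $\varphi$ stating that all processes are in the halt location, \ie $\varphi = \bigwedge_{\loc \neq \lochalt}(\#\loc = 0)$. The governing idea is that this global property can play exactly the role that invariants played in \cref{proposition:undecidability:3-processes-invariants,theorem:undecidability-local-invariants}. Every invariant used there is an \emph{upper bound} forcing a process to fire an outgoing edge carrying an \emph{equality} guard: the guard locations $\cmspta_i^{\clockCMt}$ of the $\clockCMt$-subpart carry invariant $\clockCMt = 0$ with outgoing guard $\clockCMt = 0$, and the timing locations $\locij{i}{\clockCMt}$ carry $\clockCMt \leq \paramp$ with outgoing guard $\clockCMt = \paramp$. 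If I simply delete all invariants, a process remains free to fire its outgoing edge at the intended instant, but as soon as it lets any time pass beyond that instant the relevant equality guard becomes permanently unsatisfiable, so the process is stuck forever and can never reach $\lochalt$. Requiring \emph{all} processes to reach $\lochalt$ then forbids, in any witnessing run, exactly the behaviour (a process lingering in a guard location and offering a spurious location guard at a wrong time) that broke the parametric invariant-free encoding described in \cref{ss:undecidability-local-invariants}.

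Concretely I would establish the two implications. For the forward direction, if the machine halts then, as in \cref{lemma:undecidability:reach-emptiness:integers}, for a sufficiently large valuation of $\paramp$ three processes (one per subpart) can fire every edge at its intended time and all reach $\lochalt$ via the final gadget of \cref{figure:2CM:final}; since PGR-emptiness only asks for the \emph{existence} of some number of processes, $n = 3$ suffices and $\varphi$ is reachable. For the backward direction I would show that whenever $\varphi$ is reachable for some valuation and some number $n$ of processes, the machine halts. Because $\varphi$ forces all $n$ processes into $\lochalt$, no process ever lingered (lingering implies getting stuck, as above); consequently every process advances in lock-step, each $2$CM step consuming exactly $\paramp$ time units in the $\clockCMt$-subpart. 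Thus at every integer multiple of $\paramp$ all processes occupy the guard locations of the \emph{same} step of the \emph{unique} deterministic $2$CM run, the location guards $\locguardof{\cmspta_i^{\clockCMt}}$ are offered only at correct instants, and by induction on the step number all $\clockCMcone$- and $\clockCMctwo$-processes carry identical counter values, so $0$-tests resolve uniformly and the simulation is faithful; reaching $\lochalt$ then certifies halting.

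The main obstacle is this backward direction, and within it the elimination of spurious location guards in the parametric setting. I must argue that a process that has fallen out of step — for instance by taking a location-guarded edge at an instant when the guard happens to be offered by another, differently-advanced process — can no longer reach $\lochalt$, so that ``all processes in $\lochalt$'' genuinely enforces global lock-step. This reduces to checking, location by location of the invariant-free encoding, that each outgoing edge is gated either by an equality clock guard or by a location guard that is itself available only on the rigid $\paramp$-periodic schedule of the $\clockCMt$-subpart; where a residual inequality guard survives, as on the $\clockCMcone > 0$ edge of the decrement gadget of \cref{figure:2CM:decrement:x1}, I would verify that the accompanying location guard restores rigidity. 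Getting this bookkeeping right, so that ``stuck unless perfectly synchronised'' holds at every location, is the delicate part; once it does, the correctness argument of \cref{proposition:undecidability:3-processes-invariants} transfers essentially verbatim, and the same construction yields undecidability both with and without invariants.
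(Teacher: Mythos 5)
Your proposal is correct and follows essentially the same route as the paper: reduce from 2CM halting via the 3-process gadget encoding, take the \textsc{target} property requiring all processes in $\lochalt$, and observe that this global property substitutes for invariants because any process that misses a punctual guard is stuck forever and thus falsifies the property, while the with-invariants case follows immediately from the local result. The delicate bookkeeping you identify (ruling out spurious location guards offered by out-of-step processes) is exactly the point the paper's proof also addresses, and by the same argument that lingering anywhere implies permanent blockage.
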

\begin{proof}[Proof sketch]
	The proof technical idea is that we ask whether \emph{all} processes can reach the target location~$\lochalt$, which can only be done if the 2CM was properly simulated.
	See \cref{appendix:proof:theorem:undecidability-global-noinvariants}.
\end{proof}
\section{Decidability results}\label{section:decidability}

We will first show that,
for two subclasses of \PDTNs{}, \ie{} fully parametric \PDTNs{} with a single parameter (\cref{ss:fpPDTN})
and
L/U-\PDTNs{} (\cref{ss:LUPDTNs}),
deciding PR-emptiness (resp.\ PGR-emptiness)
is equivalent to checking parametrised reachability (resp.\ PGR) in a non-parametric \DTN{}---which is decidable.
The third positive result, addressing one clock, arbitrarily many parameters and no invariants (\cref{section:decidability:noinvariants}), is more involved and requires different techniques.

\subsection{Fully parametric \PDTNs{} with a single parameter}\label{ss:fpPDTN}
\begin{restatable}{theorem}{theoremDecidabilityfpPTA}\label{theorem:decidability:fpPTA}
	PR-emptiness (resp.\ PGR-emptiness) for fully parametric \PDTNs{} with 1~parameter and arbitrarily many clocks is equivalent to PR (resp.\ PGR) for \DTNs{}.
\end{restatable}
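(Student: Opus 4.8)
The plan is to reduce PR-emptiness (resp.\ PGR-emptiness) for a fully parametric \gPTA{} $\PTA$ with a single parameter to a bounded number of PR (resp.\ PGR) queries on non-parametric \DTNs{}. The key observation is that, by \cref{definition:fpPTA}, every guard and invariant of $\PTA$ has the form $\clock \compOp \alpha \cdot \param$ with $\alpha \in \setZ$ and no constant term; hence the satisfaction of each such constraint is unchanged when the clock value and the value of $\param$ are simultaneously multiplied by the same positive factor. The whole argument rests on turning this homogeneity into a \emph{time-scaling lemma} for networks.

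First I would establish the following: for every $n \in \setNpos$, every $\lambda \in \setRpos$, and all valuations $\pval, \pval'$ with $\pval'(\param) = \lambda \cdot \pval(\param)$, a location $\loc$ is reachable in $\valuate{\Pnetwork{\PTA}{n}}{\pval}$ if and only if it is reachable in $\valuate{\Pnetwork{\PTA}{n}}{\pval'}$ (and likewise for a configuration satisfying a global property $\varphi$). The proof maps a computation $\nConfig_0 \to \cdots \to \nConfig_l$ of the first NTA to a computation of the second by multiplying every delay, and thus every clock valuation, by $\lambda$. Using the \NTA{} semantics, each ingredient is checked: for a delay transition, $\clockval_i + \cDelay' \models \valuate{\invariant(\loci{i})}{\pval}$ for all $\cDelay' \in [0,\cDelay]$ is equivalent, after dividing through by $\lambda > 0$, to the scaled configuration delaying legally under $\pval'$; for a discrete transition, the guard scales identically, resets are preserved because $\lambda \cdot 0 = 0$, and location guards depend only on locations and are left untouched. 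Because the rescaling is applied uniformly to all $n$ processes, the network-wide synchronization of delays and the disjunctive location guards are preserved.

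The reduction then follows from the fact that the integer valuations of $\param$ split into the class $\{0\}$ (which maps to itself under scaling) and the class $\setNpos$, which the lemma---taking $\lambda = 1/\pval(\param)$---collapses onto the single representative $\param = 1$. Let $\pval_0, \pval_1$ denote the valuations mapping $\param$ to $0$ and $1$, respectively. Then the set of valuations for which $\loc$ is reachable in $\Pnetwork{(\valuate{\PTA}{\pval})}{\infty}$ is empty if and only if $\loc$ is reachable neither in the \DTN{} $\Pnetwork{(\valuate{\PTA}{\pval_0})}{\infty}$ nor in $\Pnetwork{(\valuate{\PTA}{\pval_1})}{\infty}$, i.e.\ iff PR is negative for both. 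Global properties behave identically: a rescaled configuration occupies exactly the same locations as the original, so it satisfies the same $\varphi$; thus PGR-emptiness reduces to PGR on the same two \DTNs{}. The converse is immediate, since a \DTN{} is a \PDTN{} whose parameter does not occur, making PR (resp.\ PGR) a special case of PR-emptiness (resp.\ PGR-emptiness); this yields the claimed equivalence, and decidability follows from that of PR (resp.\ PGR) on \DTNs{}.

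I expect the main difficulty to lie in stating and proving the scaling lemma cleanly, specifically at two boundary points: the valuation $\param = 0$ forms its own scaling class and must therefore be tested separately rather than folded into $\param = 1$; and the uniformity of the rescaling across all processes must be argued so that any delay admissible for the entire network remains admissible after scaling. This is precisely where the homogeneity of fully parametric constraints is indispensable---a single nonzero constant term would not survive division by $\lambda$, breaking the equivalence.
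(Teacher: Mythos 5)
Your proposal follows essentially the same route as the paper: the homogeneity of fully parametric constraints yields a scaling lemma (which the paper imports from~\cite{HRSV02} rather than reproving by rescaling delays, but the content is identical), collapsing all nonzero integer valuations onto $\param = 1$, so that PR-emptiness (resp.\ PGR-emptiness) reduces to testing the two \DTNs{} obtained at $\param = 0$ and $\param = 1$. The only imprecision is in the converse direction: a \DTN{} with nonzero constants is not literally a fully parametric \PDTN{} with an unused parameter, so (as the paper briefly notes) constants other than $0$ and $1$ must first be eliminated via additional clocks and locations before the special-case argument applies.
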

\begin{proof}[Proof sketch]
	The idea is to test the satisfaction of the property on only two non-parametric \DTNs{}, \viz{} these valuating the only parameter with 0 and~1.
	We show that any other non-zero valuation is equivalent (up to rescaling) to the valuation~1:
		indeed, without constant terms in the model (other than~0), multiplying the value of the (unique) parameter by~$n$ will only impact the value of the duration of the runs by~$n$, but will not impact reachability.
	See \cref{appendix:proof:theorem:decidability:fpPTA}.
\end{proof}
\subsection{L/U-\PDTNs{} with arbitrarily many clocks and parameters}\label{ss:LUPDTNs}

Given
$d \in \setZ$,
let $\pvalzerod{d}$ denote the valuation such that
	for all $\param \in \ParamSetL$, $\pval(\param) = 0$ and
	for all $\param \in \ParamSetU$, $\pval(\param) = d$.
We will consider the special valuation $\pvalzeroinf$: strictly speaking, $\pvalzeroinf$ is not a proper parameter valuation due to~$\infty$, but valuating a PTA with $\pvalzeroinf$ consists in removing the inequalities involving upper-bound parameters with a positive coefficient.

\begin{restatable}{theorem}{theoremDecidabilityLU}\label{theorem:L/U}
	PR-emptiness (resp.\ PGR-emptiness) for L/U-\PDTNs{}
	is equivalent to PR (resp.\ PGR) for \DTNs{}.
\end{restatable}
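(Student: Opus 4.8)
The plan is to lift the classical L/U monotonicity argument~\cite{HRSV02} to the network setting, reducing the parametric emptiness problem to a single non-parametric instance evaluated at the extremal valuation~$\pvalzeroinf$. First I would define a relaxation order on valuations by $\pval \decrincr \pval'$ iff $\pval(\param) \geq \pval'(\param)$ for every lower-bound parameter $\param \in \ParamSetL$ and $\pval(\param) \leq \pval'(\param)$ for every upper-bound parameter $\param \in \ParamSetU$; since parameter values are non-negative, $\pvalzeroinf$ is the greatest element for~$\decrincr$. The crux is a monotonicity lemma: whenever $\pval \decrincr \pval'$, every computation of $\valuate{\Pnetwork{\PTA}{n}}{\pval}$ is also a computation of $\valuate{\Pnetwork{\PTA}{n}}{\pval'}$, for every network size~$n$, reaching the same configurations.

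To prove the lemma I would first check the atomic step: for a single constraint $\clock \compOp \sum_i \alpha_i \parami{i} + d$, the sign conditions of \cref{definition:LUPTA} guarantee exactly that decreasing a lower-bound parameter or increasing an upper-bound parameter can only enlarge its solution set over clock valuations (\eg{} for $\compOp \in \set{\leq, <}$, lower-bound coefficients are $\leq 0$ and upper-bound coefficients are $\geq 0$, so relaxing increases the right-hand side). Since every guard and invariant is a conjunction of such constraints, their satisfaction sets grow under~$\decrincr$ as well. The network semantics combines these timing constraints only with (parameter-free) location guards and with delays bounded by invariants; consequently, replaying the \emph{same} sequence of delays and discrete transitions witnessing a computation under~$\pval$ yields a valid computation under~$\pval'$ that reaches the very same configuration~$\nConfig$ (identical locations \emph{and} clock valuations). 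In particular the set of reachable configurations is monotone in~$\decrincr$.

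From the lemma I obtain both directions of the reduction. For the forward direction, if $\loc$ is reachable (resp.\ some $\nConfig \models \varphi$ is reachable) for some valuation~$\pval$, then since $\pval \decrincr \pvalzeroinf$ the same witness works at $\pvalzeroinf$; as $\varphi$ depends only on the occupied locations and the reached configuration is identical, its truth value is preserved. For the converse, a witnessing computation at $\pvalzeroinf$ is finite, so all clock values occurring along it are bounded by some integer~$N$. Setting every upper-bound parameter to a sufficiently large~$N$ (and keeping lower-bound parameters at~$0$) re-instates the constraints that $\pvalzeroinf$ had vacuously satisfied, while keeping them satisfied along the witness, because every clock value stays below~$N$; the same computation is therefore valid for the concrete integer valuation $\pvalzerod{N}$ and reaches the same ($\varphi$-satisfying) configuration. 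Hence PR-emptiness (resp.\ PGR-emptiness) is false iff $\loc$ is reachable (resp.\ $\varphi$ is satisfiable) in the single \DTN{} $\Pnetwork{(\valuate{\PTA}{\pvalzeroinf})}{\infty}$, which is precisely a PR (resp.\ PGR) instance over a \DTN{}; these are decidable by~\cite{AEJK24} (resp.~\cite{AJKS25}). The reverse reduction is immediate, a \DTN{} being a parameter-free L/U-\PDTN{}, giving the claimed equivalence.

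The main obstacle is the converse (downward) direction, which is \emph{not} a consequence of monotonicity---$\pvalzerod{N}$ is more restrictive than $\pvalzeroinf$---and instead relies on the boundedness of a finite witness to select a large enough finite integer valuation; care is required to verify that the previously vacuous upper-bound constraints remain satisfied along the whole run, and to note that equalities never involve parameters (otherwise the L/U sign conditions of \cref{definition:LUPTA} would be violated). A secondary point is that the $\#\loc = 0$ conjuncts of a global property behave well: since monotonicity preserves the reached configuration verbatim (same set of occupied locations), the truth value of every $\#\loc \geq 1$ and $\#\loc = 0$ atom is preserved, so no spurious occupancy is introduced when passing between $\pval$ and $\pvalzeroinf$.
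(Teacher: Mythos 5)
Your proposal is correct and follows essentially the same route as the paper: the same monotonicity lemma for the L/U relaxation order, the same reduction of both directions to the single extremal \DTN{} at $\pvalzeroinf$, and the same bounding argument (choosing a large enough finite upper-bound value from the maximal clock value along the finite witness) for the converse direction, matching the paper's choice of $D = T + |T'| + 1$. Your additional remarks---that equalities cannot involve parameters under the L/U sign conditions, and that the reached configuration (hence every $\#\loc = 0$ atom) is preserved verbatim---are sound refinements of details the paper leaves implicit.
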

\begin{proof}[Proof sketch]
	We first show that networks of L/U-\gPTAs{} are \emph{monotonic}:
	that is,
	given $\PTA$ an L/U-\gPTA{} and $\pval$ a parameter valuation,
	any computation of $\network{(\valuate{\PTA}{\pval})}{n}$ is a computation of $\network{(\valuate{\PTA}{\pval'})}{n}$,
	for all $\pval'$ such that
		upper-bound parameters are larger than or equal to their value in~$\pval$
		and
		lower-bound parameters are smaller than or equal to their value in~$\pval$.
	We then show that, given $\varphi$ a global reachability property over~$\PTA$, PGR-emptiness does not hold iff $\varphi$ is satisfied in a configuration reachable in the DTN~$\network{(\valuate{\PTA}{\pvalzeroinf})}{\infty}$, as this DTN contains the behaviours for \emph{all} parameter valuations, due to the monotonicity.
	Therefore, deciding PGR-emptiness for L/U-\PDTNs{} amounts to deciding satisfaction of~$\varphi$ in the DTN $\network{(\valuate{\PTA}{\pvalzeroinf})}{\infty}$; an additional technicality is necessary to show that if $\varphi$ is satisfied in a configuration reachable in $\network{(\valuate{\PTA}{\pvalzeroinf})}{\infty}$, then it is also reachable in $\network{(\valuate{\PTA}{\pval})}{\infty}$ for some concrete (non-$\infty$) parameter valuation~$\pval$.
	See \cref{appendix:proof:theorem:L/U}.
\end{proof}

Since PR and PGR can be solved in \EXPSPACE{} and \twoEXPSPACE{} respectively in \DTNs{}~\cite{AJKS25},
then solving PR-emptiness and PGR-emptiness can be done with the same complexities for fully parametric \PDTNs{} with a single parameter and for L/U-\PDTNs{}.

\subsection{\PDTNs{} with one clock, arbitrarily many parameters and no invariants}\label{section:decidability:noinvariants}

Our last decidability result closes the gap of \cref{section:undecidability}:
while PR-emptiness for \PDTNs{} even with a single clock and one parameter
is undecidable for local properties with invariants,
and
is undecidable for global properties even without invariants,
we show that the problem becomes decidable for local properties without invariants.
This result highlights the power of invariants in \PDTNs{}.

\begin{restatable}{theorem}{theoremDecidabilityNoInvariant}\label{theorem:decidability:no-invariant}
	PR-emptiness is decidable for \PDTNs{} with a single clock, arbitrarily many parameters, and no invariants.
\end{restatable}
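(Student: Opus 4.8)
The plan is to reduce PR-emptiness to a single-clock question, exploiting that the absence of invariants makes the disjunctive communication essentially monotone: once a process reaches a location it may idle there forever and thereby keep the corresponding location guard available, and since the number of processes is unbounded we may park arbitrarily many such ``helpers''. Consequently, as already observed for non-parametric \DTNs{}, reachability in $\Pnetwork{(\valuate{\PTA}{\pval})}{\infty}$ does not hinge on the exact synchronisation of several clocks but only on \emph{which} locations are reachable, captured by a location-saturation fixpoint. My first step is to make this fixpoint uniform in the parameter valuation. For each location $\loc$ I would describe the set $V_\loc \subseteq \setN^{\ParamCard}$ of valuations $\pval$ for which $\loc$ is reachable in the \DTN{}, and characterise it as the limit of the monotone chain $V_\loc^{(0)} \subseteq V_\loc^{(1)} \subseteq \cdots$ where $V_{\locinit}^{(0)} = \setN^{\ParamCard}$, all other $V_\loc^{(0)} = \emptyset$, and a valuation enters $V_\loc^{(k+1)}$ exactly when a single process can reach $\loc$ while treating every location $\loc'$ with $\pval \in V_{\loc'}^{(k)}$ as an always-available ($\top$) guard. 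Monotonicity (more available guards can only help) guarantees the chain stabilises within $|\LocSet|$ rounds, and the contrast with the undecidable fixed-$n$ case (\cref{proposition:2CM:fixedn}) is precisely that here the rounds decouple into \emph{single-clock} subproblems instead of forcing several clocks to synchronise.

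The heart of the argument is then a single-clock lemma, which I would state and prove separately: for a one-clock \gPTA{} without invariants and a fixed set $G \subseteq \LocSet$ of locations declared available, the set $R_\loc^{G} \subseteq \setN^{\ParamCard}$ of parameter valuations for which $\loc$ is reachable by a single process (edges with location guard in $G \cup \{\top\}$ enabled, the others disabled) is effectively semilinear, \ie{} Presburger-definable. To prove it, I would fix a discrete path and eliminate the timing data: each guard along the path has the form $\clock \compOp \sum_i \alpha_i \parami{i} + d$, and between two consecutive resets the single clock increases monotonically from $0$ by existentially quantified non-negative delays. Eliminating these real delays by Fourier--Motzkin turns the feasibility of a reset-free segment into a conjunction of linear inequalities over the $\parami{i}$, and concatenating segments yields a linear constraint on the valuations. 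The subtlety is the treatment of loops, where one must argue that the unbounded iteration count never produces a genuine nonlinear constraint: because the clock is reset along every cycle and no invariant ever bounds the total elapsed time from above, the clock value after a loop carries no information about the number of iterations, so iterations merely ``absorb'' time and can be taken minimal. Making this precise---that a bounded number of distinct segment behaviours suffices and that the iteration counters can be projected out while staying within Presburger---is the main obstacle.

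With the lemma in hand, I would assemble the decision procedure. Each update of the saturation chain can be written, by casing on the maximal available guard set, as $V_\loc^{(k+1)} = V_\loc^{(k)} \cup \bigcup_{G \subseteq \LocSet} \big( \bigcap_{\loc' \in G} V_{\loc'}^{(k)} \cap R_\loc^{G} \big)$; since Presburger-definable sets are effectively closed under finite union and intersection, and $R_\loc^{G}$ is Presburger by the lemma, every $V_\loc^{(k)}$ is effectively Presburger, and so is the fixpoint $V_\loc$ reached after at most $|\LocSet|$ rounds. PR-emptiness for target $\loc$ then holds iff $V_{\loc} = \emptyset$, which is decidable because emptiness of Presburger sets is decidable. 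Using monotonicity, the maximal available set already dominates, so there is no need to subtract the ``unavailable'' locations, keeping every set positively definable.

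Finally, I would discharge the soundness and completeness of the uniform fixpoint: that $\pval \in V_\loc$ iff $\loc$ is genuinely reachable in some $\Pnetwork{(\valuate{\PTA}{\pval})}{n}$. Soundness of each round is immediate (park one helper per location of $G$ and run the witnessing single process); completeness reuses the non-parametric \DTN{} cutoff argument recalled in \cref{section:problems}, namely that without invariants a reachable location can be occupied permanently, so the guards needed along any successful network computation can be supplied by helpers and the computation projected onto the single ``active'' process of each round. The only delicate point is that a helper must be present at the moment its guard is used; this is exactly where the absence of invariants is essential, since it allows every process to be delayed and every helper to be staggered so that each available location is occupied continuously from its earliest reachable time onward---the very feature whose loss under invariants yields the undecidability of \cref{theorem:undecidability-local-invariants}.
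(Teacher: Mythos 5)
There is a genuine gap at the heart of your reduction: the saturation fixpoint abstracts a reachable guard location into an \emph{always-available} guard, but in a \DTN{} a location guard $\locguardof{\loc'}$ only becomes available from the earliest (parametric) time at which some process can occupy $\loc'$, and never before. Your own soundness sketch (``park one helper per location of $G$'') cannot place the helper in $\loc'$ before that time, whereas the witnessing single-process run in your relaxed automaton for $R_\loc^{G}$ may cross the guarded edge earlier. Concretely, take an edge from $\locinit$ to $\loci{1}$ with clock guard $\clock = 1$ and location guard $\locguardof{\loc'}$, where $\loc'$ is reachable only by a path of duration exactly $\param$: then $\loci{1}$ is reachable iff $\pval(\param) \leq 1$, but your fixpoint puts $\loc'$ into round~$1$ for every valuation and then declares $\loci{1}$ reachable for every valuation in round~$2$. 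So $V_\loc$ strictly overapproximates the true valuation set and the emptiness test is unsound. This timing obstruction is exactly what the paper identifies as the reason the guard-elimination technique of~\cite{AEJK24} cannot be lifted to the parametric setting: one must track, as a function of the parameters, \emph{when} each guard location first becomes occupied. The paper's proof therefore guesses an order on the first uses of the location-guarded edges, computes parametric \emph{duration} sets (apSl sets) of the runs reaching each guard location together with the first-crossing times of the guarded edges, and adds to the formula the constraints that each guarded edge is first taken no earlier than the reachability time of its guard location.

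A secondary consequence is that, once timing is reinstated, your claim that everything stays within plain Presburger arithmetic no longer holds: parametric loops (\eg{} a cycle guarded by $\clock = \param$ with a reset) make the set of reachable absolute times involve multiples of parameters, which is why the paper must work in the existential fragment of Presburger arithmetic \emph{with divisibility}~\cite{LOW15,AAL24} rather than in Presburger arithmetic proper. Your ``loops absorb time and can be taken minimal'' step is precisely where this fails: iterations are indeed harmless for the local clock value (which is reset), but not for the absolute time at which a downstream location guard must be matched.
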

\begin{proof}[Proof sketch]
	We use here a completely different construction:
	we write a formula in the existential fragment of the first order theory of the integers with addition (a.k.a.\ Presburger arithmetic) enhanced with the divisibility operand (a decidable logic~\cite{LOW15}), which will include the computation of the reachable durations of the locations used in guards in parallel to the reachability of the final location.
	In order to write this formula, we rely on results relating affine parametric semi-linear sets (apSl sets), a parametric extension of semi-linear sets, to durations in a PTA.
	Solving PR-emptiness then boils down to deciding the truth of the formula.
		
	This proof technique depends crucially on the parameters being integers. The formula we construct intermixes these parameters with integer variables that encode looping behaviours in the \PDTN{}. If the parameters were allowed to take rational values, the resulting formula would fall outside the logical theories currently known to be decidable.

	See \cref{appendix:proof:theorem:decidability:no-invariant} for our full proof.
\end{proof}
\section{Conclusion}\label{section:conclusion}
\begin{table}[tb]
	\centering
	\caption{Decidability of PR-e and PGR-e for \PDTNs{} (\cellDecidable{} $=$ decidability, \cellUndecidable{} $=$ undecidability)}
	\label{table-summary}
	\setlength{\tabcolsep}{2pt} %
	\begin{tabular}{| c | c | c | c | c | c |}
		\hline
		\rowHeader{}\cellHeader{Clocks} & \cellHeader{Parameters} & \cellHeader{Invariants} & \cellHeader{Local properties} & \cellHeader{Global properties} \\
		\hline
		\multirow{2}{*}{1} & \multirow{2}{*}{arbitrary} & \cellHeader{Without} & \cellDecidable{} (\cref{theorem:decidability:no-invariant}) & \cellUndecidable{} (\cref{theorem:undecidability-global-noinvariants})\\
		\cline{3-5}
		& & \cellHeader{With} & \cellUndecidable{} (\cref{theorem:undecidability-local-invariants}) &  \cellUndecidable{}\\
		\hline
		\multirow{2}{*}{arbitrary} & \multirow{2}{*}{1-param.\ fully-parametric} & \cellHeader{Without} & \cellDecidable{} (\cref{theorem:decidability:fpPTA} + \cite{AEJK24}) & \cellOpen{} (\cref{theorem:decidability:fpPTA})\\
		\cline{3-5}
		& & \cellHeader{With} & \cellDecidable{} (\cref{theorem:decidability:fpPTA} + \cite{AJKS25}) & \cellOpen{} (\cref{theorem:decidability:fpPTA})\\
		\hline
		\multirow{2}{*}{arbitrary} & \multirow{2}{*}{L/U} & \cellHeader{Without} & \cellDecidable{} (\cref{theorem:decidability:fpPTA} + \cite{AEJK24}) & \cellOpen{} (\cref{theorem:decidability:fpPTA})\\
		\cline{3-5}
		& & \cellHeader{With} & \cellDecidable{} (\cref{theorem:decidability:fpPTA} + \cite{AJKS25}) & \cellOpen{} (\cref{theorem:decidability:fpPTA})\\
		\hline
	\end{tabular}
\end{table}
In this paper, we investigated models with uncertainty over timing parameters, and parametrised in their number of components.
We showed that the emptiness of the parameter valuations set for which a given location is reachable for some number of processes is decidable for networks of \gPTAs{} with a single clock in each process and arbitrarily many parameters, provided no invariants are used, and for local properties only; this positive result is tight, in the sense that adding invariants or considering global properties makes this problem undecidable.
We additionally exhibited two further decidable subclasses, by restricting the use of the parameters, without restriction of the number of clocks and parameters.
We summarize our results in \cref{table-summary}.
Beside emphasizing the strong power of invariants and global properties, our results show an interesting fact on the expressive power of the communication model: while reachability is decidable for PTAs with 1 or 2 clocks and one parameter~\cite{AHV93,ALM20,GH24}, it becomes undecidable in our setting with a single parameter.

\fakeParagraph{Perspectives}
First, it can be easily shown that our negative results from \cref{section:undecidability} all hold as well over discrete time.
Second, our undecidability results are expressed so far over integer-valued parameters.
While the undecidability of integer-valued parameters implies undecidability for rational-valued parameters, the case of \emph{bounded} rational-valued parameters (\eg{} in a closed interval) remains open; we conjecture it remains undecidable using a different encoding of our undecidability proofs.
However, we have no hint regarding the extension of our decidable results (notably \cref{theorem:decidability:no-invariant}), as our proof techniques heavily rely on the parameter integerness.
Also, whether extending the decidable case of \cref{theorem:decidability:no-invariant} to two clocks preserves decidability remains open too.

An interesting future work is the question of \emph{parameter synthesis} (for the decidable cases).
That is, beyond deciding the emptiness of the valuations set for which some properties hold, can we \emph{synthesize} them?

Finally, considering \emph{universal} properties is an interesting perspective: this can include the existence of a parameter valuation for which \emph{all} numbers of processes satisfy a property or, conversely, the fact that \emph{all} parameter valuations are such that some number of processes (or all processes) satisfy a property.

\newpage

	\newcommand{\CCIS}{Communications in Computer and Information Science}
	\newcommand{\ENTCS}{Electronic Notes in Theoretical Computer Science}
	\newcommand{\FAC}{Formal Aspects of Computing}
	\newcommand{\FundInf}{Fundamenta Informaticae}
	\newcommand{\FMSD}{Formal Methods in System Design}
	\newcommand{\IJFCS}{International Journal of Foundations of Computer Science}
	\newcommand{\IJSSE}{International Journal of Secure Software Engineering}
	\newcommand{\IPL}{Information Processing Letters}
	\newcommand{\JAIR}{Journal of Artificial Intelligence Research}
	\newcommand{\JLAP}{Journal of Logic and Algebraic Programming}
	\newcommand{\JLAMP}{Journal of Logical and Algebraic Methods in Programming} %
	\newcommand{\JLC}{Journal of Logic and Computation}
	\newcommand{\LMCS}{Logical Methods in Computer Science}
	\newcommand{\LNCS}{Lecture Notes in Computer Science}
	\newcommand{\RESS}{Reliability Engineering \& System Safety}
	\newcommand{\RTS}{Real-Time Systems}
	\newcommand{\SCP}{Science of Computer Programming}
	\newcommand{\SOSYM}{Software and Systems Modeling ({SoSyM})}
	\newcommand{\STTT}{International Journal on Software Tools for Technology Transfer}
	\newcommand{\TCS}{Theoretical Computer Science}
	\newcommand{\TOPLAS}{{ACM} Transactions on Programming Languages and Systems ({ToPLAS})}
	\newcommand{\ToPNoC}{Transactions on {P}etri Nets and Other Models of Concurrency}
	\newcommand{\TOSEM}{{ACM} Transactions on Software Engineering and Methodology ({ToSEM})}
	\newcommand{\TSE}{{IEEE} Transactions on Software Engineering}
	\bibliographystyle{plainurl}%
	\bibliography{PDTN}
\newpage
\appendix
\section{Proofs of \cref{section:undecidability}}
\subsection{Proof of \cref{proposition:undecidability:3-processes-noinvariants}\label{appendix:proof:proposition:undecidability:3-processes-noinvariants}}
\propositionUndecThreeNoInv*
\begin{proof}
	Let us show that the absence of invariants in the proof of \cref{proposition:undecidability:3-processes-invariants} does not harm the encoding of the 2CM.
	The overall idea is that, if a process ``chooses'' to not take some outgoing transition in the absence of invariants, then we show that the whole computation is stuck, and $\lochalt$ is unreachable.
	Put it differently, while the absence of invariants may add some runs, none of these runs can reach~$\lochalt$, and the final correctness argument of the proof of \cref{proposition:undecidability:3-processes-noinvariants} still holds.

	In the initial gadget in \cref{figure:2CM:initial:full}, if the process encoding~$\clockCMcone$ (resp.~$\clockCMctwo$) fails in taking the transition from $\locij{0}{\clockCMt}$ to~$\locij{1}{1}$ (resp.~from $\locij{0}{\clockCMt}$ to~$\locij{1}{2}$) after exactly $\param$ time units, then the first process encoding~$\clockCMt$ is stuck forever in~$\locij{1}{\clockCMt}$, and $\lochalt$ is unreachable.

	Now, consider the subpart encoding the clock~$\clockCMt$ (in all gadgets): in the absence of invariants, no transition is forced to be taken.
	However, since all outgoing transitions from the $\cmspta_i^{\clockCMt}$ locations are guarded by $\clockCMt = 0$, staying more than 0 time unit blocks this process forever, and $\lochalt$ becomes unreachable.
	The same applies to intermediate locations, which all have tight guards (\eg{} $\clockCMt = \param$).
	Therefore, to reach~$\lochalt$, then the process encoding clock~$\clockCMt$ must take all its guards at the appropriate time---which we assume from now on.

	Then, consider the two subparts encoding clocks $\clockCMcone$ and~$\clockCMctwo$.
	For the same reason, failing in taking a transition will block the process forever and, due to the final gadget in \cref{figure:2CM:final} ensuring transitions are guarded by $\clockx = 0$, will block the first process, and therefore $\lochalt$ will be unreachable.

	This concludes the proof.
\end{proof}
\subsection{Proof of \cref{proposition:2CM:fixedn}\label{appendix:proof:proposition:2CM:fixedn}}
\propositionTwoCMUndecFixed*
\begin{proof}
	The idea is to reuse the construction of the proof of \cref{proposition:undecidability:3-processes-noinvariants}, by ``killing'' any process except~3 by sending them to ``sink'' locations, from which they cannot interfere with the encoding of the 2-counter machine.
	Fix $n > 3$.
	The modified initial gadget is given in \cref{figure:2CM:initial:n}.
	The first three processes behave as in \cref{figure:2CM:initial:full}.
	Then, all remaining processes are sent to ``sink'' locations ($\locij{s}{3}$ to~$\locij{s}{n-1}$); for one process~$i$ to reach its sink location~$\locij{s}{i}$, its predecessor $i-1$ must have reached its own sink location~$\locij{s}{i-1}$, due to the location guard ``$\locguardof{\locij{s}{i-1}}$''.
	Finally note that the first process (simulating~$\clockCMt$) must make sure that the last process has reached its sink location (location guard ``$\locguardof{\locij{s}{n-1}}$'') before reaching the starting location~$\cmspta_0^{\clockCMt}$.
	This ensures that all processes $i > 3$ are in sink locations, where they cannot interfere with the encoding.

	The rest of the proof then follows the reasoning of the proof of \cref{proposition:undecidability:3-processes-noinvariants}.
\end{proof}
\begin{figure*}
{
	\centering
	\scalebox{\generalFigsScaleFactor}{
		\begin{tikzpicture}[PTA, node distance=2.5cm]
			\node[location, initial] at (0,0) (l0) {$\locij{0}{\clockCMt}$};
			\node[location, right of=l0] (l1) {$\locij{1}{\clockCMt}$};
			\node[location2CM, right=of l1] (s0) {$\cmspta_0^{\clockCMt}$};

			\node[location, below=of l1, yshift=5em] (l1_1) {$\locij{1}{1}$};
			\node[location2CM, right=of l1_1] (s0_1) {$\cmspta_0^1$};

			\node[location, below=of l1_1, yshift=5em] (l1_2) {$\locij{1}{2}$};
			\node[location2CM, right=of l1_2] (s0_2) {$\cmspta_0^2$};

			\node[location, sink, below=of l1_2, yshift=6em] (l1_3) {$\locij{s}{3}$};

			\node[below=of l1_3, yshift=6em] (l1_i) {$\cdots$};

			\node[location, sink, below=of l1_i, yshift=6em] (l1_n-1) {$\locij{s}{{n-1}}$};

			\path (l0) edge node[above, align=center]{$\clockx = \paramp \land \clockx > 1$} node[below]{$\clockx \assign 0$} (l1);
			\path (l1) edge node[above, locguard]{$\locguardof{\locij{s}{n-1}}$} node[below]{$\clockx = 0$} (s0);

			\path (l0) edge[bend right] node[above, locguard]{$\locguardof{\locij{1}{\clockCMt}}$} node[below,xshift=2em,yshift=-.5em]{$\clockx \assign 0$} (l1_1);
			\path (l1_1) edge node[above, locguard]{$\locguardof{\cmspta_0^{\clockCMt}}$} node[below, align=center]{$\clockx = 0$} (s0_1);

			\path (l0) edge[bend right] node[above, locguard]{$\locguardof{\locij{1}{1}}$} node[below,xshift=1em]{$\clockx \assign 0$} (l1_2);
			\path (l1_2) edge node[locguard]{$\locguardof{\cmspta_0^{\clockCMt}}$} node[below, align=center]{$\clockx = 0$} (s0_2);

			\path (l0) edge[bend right] node[below, locguard]{$\locguardof{\locij{1}{2}}$} (l1_3);

			\path (l0) edge[bend right] node[left, locguard]{$\locguardof{\locij{s}{{n-2}}}$} (l1_n-1);
		\end{tikzpicture}
	}

}
	\caption{Modified initial gadget for exactly $n$ processes}
	\label{figure:2CM:initial:n}
\end{figure*}
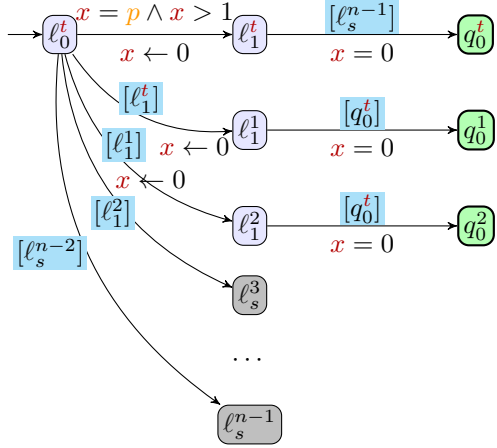
\subsection{Proof of \cref{theorem:undecidability-local-invariants}\label{appendix:proof:theorem:undecidability-local-invariants}}
\theoremUndecInv*
\begin{proof}
	First note that, in order to traverse the gadgets from the proof of \cref{proposition:undecidability:3-processes-invariants}, one needs (at least) 3~processes, due to the interdependency between the location guards.
	We will show that any process beyond the 3 required processes can only ``mimic'' the behaviour of one of the 3 required processes.

	Consider a process beyond the first three processes.
	In the initial gadget (\cref{figure:2CM:initial:full}), it can ``choose'' any of the three branches leading to the three subparts, as the first three processes make them all available.

	First assume this extra process follows the first subpart, simulating clock~$\clockCMt$: observe that, in all the gadgets of the first subpart (\cref{figure:2CM:initial:t,figure:2CM:increment:t,figure:2CM:decrement:t}), the behaviour is almost completely deterministic (\eg{} all locations with invariant ``$\clockCMt \leq \param$'' are followed by a transition guarded by ``$\clockCMt = \param$'').
	The only exception is in the branching to either $\locij{i}{\clockCMt}$ or~$\locij{k}{\clockCMt}$ in \cref{figure:2CM:decrement:t};
	assuming exactly one of the location guards $\locguardof{\locij{i1}{1}}$ or~$\locguardof{\locij{k}{1}}$ holds (which we discuss below), then this aforementioned extra process will follow exactly the behaviour of the first process.

	Alternatively, assume that this extra process follows the second subpart (simulating~$\clockCMcone$) and let us show that it will follow the second process %
		(the third subpart is similar).
	In most locations of our gadgets (\cref{figure:2CM:initial:x1,figure:2CM:increment:x1,figure:2CM:decrement:x1}), the outgoing guards are \emph{tight} \wrt{} the invariant (\eg{} invariant ``$\clockCMcone \leq \param + 1$'' followed by a transition guarded by ``$\clockCMcone = \param + 1$''), and this extra process is forced to follow the second process.
	In the remaining locations followed by non-tight guards (\eg{} ``$\clockCMcone \leq \param$'' in the increment gadget in \cref{figure:2CM:increment:x1}), the guard is always additionally guarded by the location guard $\locguardof{\cmspta_j^{\clockCMt}}$, which makes its behaviour deterministic since the processes simulating clock~$\clockCMt$ are deterministic.
	Therefore, this extra process is again forced to follow the second process.
	The 0-test and decrement gadget is similar, with one additional crucial observation: since the aforementioned extra process mimicked the second process so far, then in~$\cmspta_i^{1}$, due to the location guard~$\locguardof{\cmspta_i^{\clockCMt}}$ ensuring a transition in 0-time, then either $\clockCMcone = 0$ \emph{or} $\clockCMcone > 0$ holds, and therefore either both processes simulating~$\clockCMcone$ reach~$\locij{i1}{1}$, \emph{or} both of them reach~$\locij{k}{1}$---which justifies the assumption made earlier.

	For these reasons, $\lochalt$ is reachable only if the first three processes correctly simulate the 2CM;
	since any additional processes will just simulate one of the first three processes, we can just apply the correctness argument from the proof of \cref{proposition:undecidability:3-processes-invariants}.
\end{proof}
\subsection{Proof of \cref{theorem:undecidability-global-noinvariants}\label{appendix:proof:theorem:undecidability-global-noinvariants}}
\theoremUndecGlobal*
\begin{proof}
	We prove the result \emph{without} invariants.
	(The case with invariants is simpler and follows immediately.)

	The idea is that we ask whether \emph{all} processes can reach the target location~$\lochalt$, which can only be done if the 2CM was properly simulated.
	First note that asking whether \emph{all} processes can reach the target location~$\lochalt$ is a property satisfying the definition of global property in \cref{paragraph:definition:global}, by simply checking that the number of all locations but~$\lochalt$ is exactly~0 (called ``target problem'').

	As in the proof of \cref{theorem:undecidability-local-invariants}, in order to traverse the gadgets from the proof of \cref{proposition:undecidability:3-processes-invariants}, one needs (at least) 3~processes, due to the interdependency between the location guards.
	We show here that any process beyond the 3 required processes can only ``mimic'' the behaviour of one of the 3 required processes, or block the system due to a timelock if it fails to correctly mimic it.

	First note that if any of the first three processes does not correctly encode the 2CM (which is a possibility due to the absence of invariants), then an outgoing guard will not be firable, and therefore this process will be blocked forever in its current location, and the global property that all processes must reach~$\lochalt$ cannot hold.
	So we assume that the first three processes correctly encode the~2CM (\WLOG{} we assume that process~$i$ encodes the $i$-th subpart).

	Now consider a process beyond the first three processes.
	In the initial gadget (\cref{figure:2CM:initial:full}), it can ``choose'' any of the three branches leading to the three subparts, as the first three processes make them all available.
	Due to the absence of invariants, the process can also choose to not pass the initial gadget---but failing to take such a guard will block it forever in a location due to the guards $\clockx = 0$, and the global property that all processes must reach~$\lochalt$ will never hold.

	First assume this extra process follows the first subpart, simulating clock~$\clockCMt$:
	just as in the proof of \cref{theorem:undecidability-local-invariants},
	because in all the gadgets of the first subpart (\cref{figure:2CM:initial:t,figure:2CM:increment:t,figure:2CM:decrement:t}), the guards are always punctual (\ie{} involve equalities), then this extra process will either follow exactly the behaviour of the first process, or fail in taking some guard---therefore remaining forever in its location and violating the global reachability property.

	Alternatively, assume that this extra process follows the second subpart (simulating~$\clockCMcone$).
	As in the former reasoning, either that extra process will exactly mimic the behaviour of the second process, or will fail in taking some guard, and therefore being blocked in its location, thus violating again the global reachability property.

	The case of the final gadget is similar: if any of the processes arrive too early or too late, they will be blocked due to the urgent guards (of the form $\clockx = 0$ together with the location guards), and the global reachability will be violated.

	For these reasons, $\lochalt$ is reachable only if the first three processes correctly simulate the 2CM and if all additional processes simulate exactly one of the first three processes.
	Finally, we can again apply the correctness argument from the proof of \cref{proposition:undecidability:3-processes-noinvariants}.
\end{proof}
\section{Proofs of \cref{section:decidability}}
\subsection{Proof of \cref{theorem:decidability:fpPTA}\label{appendix:proof:theorem:decidability:fpPTA}}
\theoremDecidabilityfpPTA*
\begin{proof}
	Let $\PTA$ be a fully parametric \gPTA{} with 1~parameter~$\param$.
	$\Pnetwork{\PTA}{\infty}$ is therefore a fully parametric \PDTN{}.
	Let $\pval_0$ and~$\pval_1$ denote the valuations such that $\pval_0(\param) = 0$ and $\pval_1(\param) = 1$, a concept reused from the proof of \cite[Corollary~4.8]{HRSV02}.

	We prove the result for global reachability properties (PGR-emptiness), as local properties are a subcase.
	Fix a global property~$\varphi$.
	Let us show that PGR-emptiness does not hold iff $\varphi$ is satisfied in a configuration reachable in $\network{(\valuate{\PTA}{\pval_0})}{\infty}$ or in~$\network{(\valuate{\PTA}{\pval_1})}{\infty}$.

	The following lemma derives easily from \cite[Proposition~4.7]{HRSV02}, adapted to the semantics of \NTAs{} (\cref{semantics-NTAs}), and comes from the fact that, whenever no constant terms are used in a \gPTA{}, then rescaling the parameter valuation does not impact the satisfaction of reachability properties.

	\begin{lemma}[Multiplication of constants]\label{lemma:HRSV02:4.7}
		Let $\PTA$ be a fully parametric \gPTA{} with a single parameter~$\param$. %
		Fix $n \in \setN$.
		Let $\varphi$ be a global property.
		Then for all parameter valuations~$\pval$,
		a configuration $\nConfig$ with $\nConfig \models \varphi$ is reachable in~$\network{(\valuate{\PTA}{\pval})}{n}$
		iff
		$\forall t \in \setQpos$ such that $t \times \pval(\param) \in \setN$,
		a configuration $\nConfig'$ with $\nConfig' \models \varphi$ is reachable in~$\network{(\valuate{\PTA}{(t \times \pval)})}{n}$,
		where $t \times \pval$ denotes the valuation such that %
			$(t \times \pval)(\param) = t \times (\pval(\param))$.
	\end{lemma}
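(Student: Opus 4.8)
The plan is to prove the underlying equivalence ``a configuration satisfying~$\varphi$ is reachable in $\network{(\valuate{\PTA}{\pval})}{n}$ iff such a configuration is reachable in $\network{(\valuate{\PTA}{(t \times \pval)})}{n}$'' by a \emph{time-rescaling} argument, exploiting that a fully parametric \gPTA{} has no constant term: every guard and invariant is a conjunction of homogeneous atomic constraints $\clock \compOp \alpha \times \param$ with $\alpha \in \setZ$. The lemma then follows from this equivalence: the direction $\Leftarrow$ is simply the instance $t = 1$ (which satisfies $1 \times \pval(\param) = \pval(\param) \in \setN$), while the direction $\Rightarrow$ applies the equivalence to each admissible~$t$ in turn.

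First I would introduce, for $t \in \setQpos$, the scaling map sending a clock valuation $\clockval$ to $t\,\clockval$, defined by $(t\,\clockval)(\clock) = t \times \clockval(\clock)$, and lifting to configurations by $\big((\loc_1,\clockval_1), \ldots, (\loc_n,\clockval_n)\big) \mapsto \big((\loc_1, t\,\clockval_1), \ldots, (\loc_n, t\,\clockval_n)\big)$, leaving every location component untouched. This map is a bijection whose inverse is scaling by $1/t$ (legal since $t \in \setQpos$ implies $1/t \in \setQpos$); it fixes the initial configuration $\nConfigInit = (\locinit, \ClocksZero)^n$ because $t \times 0 = 0$; and, since a global reachability property $\varphi$ constrains only the numbers of processes per location, we have $\nConfig \models \varphi$ iff its image satisfies~$\varphi$.

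The core step is to show that this map is an isomorphism between the transition systems of $\network{(\valuate{\PTA}{\pval})}{n}$ and $\network{(\valuate{\PTA}{(t \times \pval)})}{n}$, after rescaling every delay by the factor~$t$. For an atomic constraint $\clock \compOp \alpha \times \param$ and any $t > 0$, we have $\clockval(\clock) \compOp \alpha \times \pval(\param)$ iff $t \times \clockval(\clock) \compOp \alpha \times (t \times \pval)(\param)$, since multiplying both sides of an inequality by the positive rational~$t$ preserves its truth; applying this to each conjunct, guards and invariants are satisfied by $\clockval$ under~$\pval$ exactly when they are satisfied by $t\,\clockval$ under $t \times \pval$. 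For a delay transition $\transitionOf{\dtnconfig}{\cDelay}{\dtnconfigdelay}$ the invariant must hold throughout $[0, \cDelay]$; under scaling the admissible delay becomes $t\,\cDelay$ and the reparametrisation $\cDelay' \mapsto t\,\cDelay'$ is a bijection from $[0, \cDelay]$ onto $[0, t\,\cDelay]$, so the continuous condition transfers verbatim. For a discrete transition, the guard satisfaction transfers by the same observation, resets commute with scaling (as $\reset{(t\,\clockval)}{\resets} = t\,(\reset{\clockval}{\resets})$, $0$ being a fixed point of multiplication by~$t$), and the location guard~$\locguard$ is a purely location-based condition, hence unaffected.

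Putting these together, any computation of $\network{(\valuate{\PTA}{\pval})}{n}$ maps, by scaling each configuration and each delay by~$t$, to a computation of $\network{(\valuate{\PTA}{(t \times \pval)})}{n}$ reaching an image configuration that still satisfies~$\varphi$, and conversely via the factor~$1/t$; this yields the equivalence and hence the lemma. I do not expect a genuine obstacle here: the only points demanding care are the strict positivity of~$t$ (which is why the statement restricts to $t \in \setQpos$, guaranteeing that the truth of each inequality is preserved under scaling both sides by~$t$), the handling of the ``for all $\cDelay' \in [0,\cDelay]$'' quantifier in the delay transition through the interval bijection, and the integrality side condition $t \times \pval(\param) \in \setN$ that keeps $t \times \pval$ a legitimate integer-valued parameter valuation.
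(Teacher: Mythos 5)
Your proof is correct and follows essentially the same route as the paper, which justifies this lemma by appeal to the rescaling argument of~\cite[Proposition~4.7]{HRSV02} (no constant terms means multiplying the parameter valuation, clock valuations, and delays by a positive rational preserves all guards, invariants, resets, and location guards, hence all computations and the location-only property~$\varphi$). Your write-up simply spells out in full the isomorphism that the paper leaves to the citation.
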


	We can now proceed to the proof of \cref{theorem:decidability:fpPTA}.

	\begin{description}
		\item[$\Rightarrow$]
			Assume PGR-emptiness does not hold for $\Pnetwork{\PTA}{\infty}$, \ie{} there exists $\pval$ such that there exists $n \in \setNpos$ such that $\varphi$ is satisfied in a reachable configuration in~$\network{(\valuate{\PTA}{\pval})}{n}$.
			Let us show that $\varphi$ is satisfiable in a configuration reachable in $\network{(\valuate{\PTA}{\pval_0})}{n}$ or in~$\network{(\valuate{\PTA}{\pval_1})}{n}$.

			If $\pval(\param) = 0$, then the result is immediate.
			If $\pval(\param) \neq 0$, then from \cref{lemma:HRSV02:4.7}, $\varphi$ is satisfied in a configuration reachable in $\network{(\valuate{\PTA}{\pval_1})}{n}$ (by choosing some appropriate $t$, \ie{} $\frac{1}{\pval(\param)}$).

		\item[$\Leftarrow$]
			Assume $\varphi$ is satisfied in a reachable configuration in~$\network{(\valuate{\PTA}{\pval_0})}{\infty}$ or in $\network{(\valuate{\PTA}{\pval_1})}{\infty}$.
			That is, there exists $n \in \setNpos$ such that
			there is a computation $\gcomputation$ of $\network{(\valuate{\PTA}{\pval_0})}{n}$ or of~$\network{(\valuate{\PTA}{\pval_1})}{n}$ reaching a configuration~$\nConfig$ s.t.\ $\nConfig \models \varphi$.
			Therefore PGR-emptiness does not hold.
	\end{description}
	Therefore, it suffices to test the satisfaction of $\varphi$ in $\network{(\valuate{\PTA}{\pval_0})}{\infty}$ and~$\network{(\valuate{\PTA}{\pval_1})}{\infty}$.

	Finally, the hardness argument is immediate, considering a \PDTN{} without parameter, and replacing constants different from~1 with additional clocks and locations.
\end{proof}
\subsection{Proof of \cref{theorem:L/U}\label{appendix:proof:theorem:L/U}}

Recall that $\ParamSet = \ParamSetL \uplus \ParamSetU$.
Given $\pval, \pval'$, we write $\pval' \decrincr \pval$ whenever
$\forall \param \in \ParamSetL, \pval'(\param) \leq \pval(\param)$
and
$\forall \param \in \ParamSetU, \pval'(\param) \geq \pval(\param)$.

\begin{lemma}[Monotonicity]\label{lemma:monotonicity-L/U-PDTN}
	Let $\PTA$ be an L/U-\gPTA{}.
	Let $\pval$ be a parameter valuation.
	For any $\pval'$ such that $\pval' \decrincr \pval$,
	for any~$n \in \setNpos$,
	any computation of $\network{(\valuate{\PTA}{\pval})}{n}$ is a computation of $\network{(\valuate{\PTA}{\pval'})}{n}$.
\end{lemma}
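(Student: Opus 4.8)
The plan is to prove monotonicity by a direct induction on the length of the computation, showing that every individual transition that is enabled under the "smaller" valuation $\pval$ remains enabled under the "larger" valuation $\pval'$ (where $\pval' \decrincr \pval$ means lower-bound parameters decrease and upper-bound parameters increase). The key observation driving the whole argument is the L/U structure: the defining property of an L/U-\gPTA{} (\cref{definition:LUPTA}) guarantees that lower-bound parameters appear only in constraints of the form $\clock \geq \cdots$ (or $>$) with nonnegative coefficients and in $\clock \leq \cdots$ with nonpositive coefficients, and symmetrically for upper-bound parameters. The consequence I want to extract is a satisfaction-monotonicity lemma at the level of a single constraint: if $\clockval \models \valuate{\constraint}{\pval}$ then $\clockval \models \valuate{\constraint}{\pval'}$ whenever $\pval' \decrincr \pval$.

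First I would establish this constraint-level claim. A constraint is a conjunction of inequalities $\clock \compOp \sum_i \alpha_i \parami{i} + d$, so it suffices to treat a single inequality and then conjoin. I would split on the comparison operator. For a lower-bound inequality $\clock \leq \sum_i \alpha_i \parami{i} + d$ (with $<$ analogous), the L/U conditions force $\alpha_i \leq 0$ for $\param_i \in \ParamSetL$ and $\alpha_i \geq 0$ for $\param_i \in \ParamSetU$; since $\pval'(\param_i) \leq \pval(\param_i)$ on $\ParamSetL$ and $\pval'(\param_i) \geq \pval(\param_i)$ on $\ParamSetU$, each term $\alpha_i \pval'(\param_i) \geq \alpha_i \pval(\param_i)$, so the right-hand side can only grow and the upper-bound inequality remains satisfied. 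The case $\clock \geq \sum_i \alpha_i \parami{i} + d$ (and $>$) is the mirror image: the sign conditions flip, the right-hand side can only shrink, and the lower bound on $\clock$ is preserved. Equalities are excluded in proper L/U-PTAs, or handled by treating $=$ as the conjunction $\leq \land \geq$ only if both sign conditions can hold simultaneously (forcing $\alpha_i = 0$), so no issue arises.

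With the constraint-level lemma in hand, the induction is routine. Both the guards $\guard$ and the invariants $\invariant(\loc)$ are constraints over $\ClockSet \cup \ParamSet$, so by the lemma each discrete-transition guard satisfied under $\pval$ is satisfied under $\pval'$, and each invariant satisfied under $\pval$ (required throughout delay transitions and in the configurations themselves) is satisfied under $\pval'$. Crucially, the location guards $\locguard$ and the set of processes, their resets, and the timings are entirely independent of the parameter valuation, so the structural disjunctive-synchronization condition on location guards (some other process occupies $\locguard$) transfers verbatim. Thus, inducting on the transitions of a computation $\gcomputation$ of $\network{(\valuate{\PTA}{\pval})}{n}$, I reconstruct the identical sequence of delays and $(i,\action)$-labelled discrete steps as a valid computation of $\network{(\valuate{\PTA}{\pval'})}{n}$, since every enabling condition is preserved.

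The main obstacle—really the only subtle point—is handling delay transitions correctly, because the NTA semantics (see the delay-transition rule on \cref{semantics-NTAs}) requires the invariant $\invariant(\loc_i)$ to hold not merely at the endpoints but \emph{throughout} the entire delay, i.e.\ for all $\cDelay' \in [0,\cDelay]$. I would address this by noting that the constraint-level monotonicity lemma is pointwise in the clock valuation: it applies to \emph{every} intermediate valuation $\clockval_i + \cDelay'$ independently, so if $\clockval_i + \cDelay' \models \valuate{\invariant(\loc_i)}{\pval}$ holds for all $\cDelay' \in [0,\cDelay]$, then the same universally-quantified satisfaction holds under $\pval'$. Hence the identical delay $\cDelay$ is legal under $\pval'$, and the whole timed path replays unchanged, completing the induction and the proof.
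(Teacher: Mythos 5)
Your proof is correct and follows essentially the same route as the paper, which simply asserts that enlarging the guards under $\pval' \decrincr \pval$ only adds behaviours; you supply the constraint-level sign analysis and the induction over the computation that this one-line argument leaves implicit. The details you fill in (pointwise monotonicity for invariants along delays, parameter-independence of location guards, and the remark on equality constraints) are all accurate and consistent with the paper's intent.
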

\begin{proof}
	From the fact that any valuation $\pval' \decrincr \pval$ will only \emph{add} behaviours due to the enlarged guards.
\end{proof}
\theoremDecidabilityLU*
\begin{proof}
	We prove the result for global reachability properties (PGR-emptiness), as local properties are a subcase.
	Let $\PTA$ be an L/U-\gPTA{} and $\varphi$ a global reachability property over~$\PTA$.
	$\Pnetwork{\PTA}{\infty}$ is therefore an L/U-\PDTN{}.
	Consider the \DTN{} $\network{(\valuate{\PTA}{\pvalzeroinf})}{\infty}$.
	Let us show that PGR-emptiness does not hold iff $\varphi$ is satisfied in a configuration reachable in $\network{(\valuate{\PTA}{\pvalzeroinf})}{\infty}$.

	\begin{description}
		\item[$\Rightarrow$]
			Assume PR-emptiness does not hold for $\Pnetwork{\PTA}{\infty}$, \ie{} there exists $\pval$ such that there exists $n \in \setNpos$ such that $\varphi$ is satisfied in~$\network{(\valuate{\PTA}{\pval})}{n}$.
			That is, there exists a computation~$\gcomputation$ of~$\network{(\valuate{\PTA}{\pval})}{n}$ reaching a configuration~$\nConfig$ such that $\nConfig \models \varphi$.
			From \cref{lemma:monotonicity-L/U-PDTN},
			$\gcomputation$ is a computation of $\network{(\valuate{\PTA}{\pval'})}{n}$, for any $\pval' \decrincr \pval$.
			And by extension, completely removing the upper-bound guards (\ie{} valuating upper-bound parameters with~$\infty$) only adds behaviour, and therefore $\gcomputation$ is a computation of $\network{(\valuate{\PTA}{\pvalzeroinf})}{n}$.
			Hence $\nConfig$ is reachable in $\network{(\valuate{\PTA}{\pvalzeroinf})}{\infty}$, and hence $\varphi$ is satisfied.

		\item[$\Leftarrow$]
			Assume there exists a configuration $\nConfig$ reachable in $\network{(\valuate{\PTA}{\pvalzeroinf})}{\infty}$ such that $\nConfig \models \varphi$.
		That is, there exists $n \in \setNpos$ such that there is a computation $\gcomputation$ of $\network{(\valuate{\PTA}{\pvalzeroinf})}{n}$ reaching a configuration~$\nConfig$ s.t.\ $\nConfig \models \varphi$.
		Now, $\pvalzeroinf$ is not a proper parameter valuation, so we need to exhibit a parameter valuation assigning to each parameter
		an integer value.
		We reuse the same concrete parameter valuation for upper-bound parameters as exhibited in \cite[Proposition~4.4]{HRSV02}: let $T'$ be the smallest constant occurring in the L/U-\gPTA{}~$\PTA$, and let~$T$ be the maximum clock valuation along~$\gcomputation$.
		Fix $D = T + |T'| + 1$.
		(We add $T'$ to compensate for potentially negative constant terms ``$d$'' in guards and invariants of~$\PTA$.)
		Since the maximum clock valuation along~$\gcomputation$ is~$T$, any guard of the form $\clock \leq \sum_{1 \leq i \leq \ParamCard} \alpha_i \times \parami{i} + d$, that was replaced with $\clock < \infty$ in $\network{(\valuate{\PTA}{\pvalzeroinf})}{n}$, can be equivalently replaced with $\clock \leq \sum_{1 \leq i \leq \ParamCard} \alpha_i \times D + d$ without harming the satisfaction of the guard.
		Therefore, $\nConfig$ is reachable in $\network{(\valuate{\PTA}{\pvalzerod{D}})}{n}$, and hence in $\network{(\valuate{\PTA}{\pvalzerod{D}})}{\infty}$.
		Therefore, since $\nConfig \models \varphi$, PGR-emptiness does not hold.
	\end{description}
	Therefore, deciding PGR-emptiness for L/U-\PDTNs{} amounts to deciding satisfaction of~$\varphi$ in the DTN $\network{(\valuate{\PTA}{\pvalzerod{D}})}{\infty}$.

	The case of local properties follows a similar reasoning.
\end{proof}
\subsection{Proof of \cref{theorem:decidability:no-invariant}\label{appendix:proof:theorem:decidability:no-invariant}}
\theoremDecidabilityNoInvariant*
In~\cite{AEJK24}, the shortest time to reach a location could be computed, allowing to replace the location guards one by one.
That is, for each location appearing in a location guard, we send one process to this location as quickly as possible, and which then remains in this location forever.
Hence, from the time this location can be reached, the location guard remains satisfied forever.
Note that this method only works thanks to the absence of invariants---which allows processes to ``die'' in every location.

However, this method cannot be reused here in the presence of parameters, as the notion of a ``shortest time'' is not entirely well-defined in this setting.
As such, we do not want to remove the location guards once at a time.
Instead, we will write a formula of the first order theory of the integers with addition (a.k.a.\ Presburger arithmetic) enhanced with the divisibility operand, which will include the computation of the reachable durations of the locations used in guards in parallel to the reachability of the final location.
Hence, solving PR-emptiness will boil down to deciding the truth of the formula.

In order to write this formula, we rely on results relating affine parametric semi-linear sets (apSl sets), a parametric extension of semi-linear sets, to durations in a PTA.
An apSl sets is a function associating to a vector of parameter values $\mathbf{p}$ a semi-linear set of vectors of integers.
We will not need the specific shape of apSl sets here, but instead two important properties they have.

First, given a PTA~$\PTA$ with one clock and arbitrarily many parameters %
and given two locations $\loc, \loc'$ of~$\PTA$,
one can compute~\cite{AAL24} the set of parametric durations of runs reaching $\loc'$
from~$\loc$, and represent it using a one-dimensional apSl set.
It is interesting to note that the apSl set representation only contain integers, while
the actual durations are a set of real values. The idea of the apSl representation is that
the value $2i$ is in $S(\mathbf{p})$ iff the integer $i$ is a reachable duration, and
$2i +1$ is in $S(\mathbf{p})$ iff all the values of the interval $(i,i+1)$ are reachable
durations. This representation hence strongly requires that the parameters range over
integers.
We also note that the construction of this apSl set can easily be modified so that
given two locations $\loc, \loc'$ and $n$ edges $t_1, \dots, t_{n}$ of~$\PTA$,
one could include the information of when the edge $t_1$ was crossed on the way to $\loc'$ for the first time.
For instance, if $n=1$, $(2i+1,2j)$ belongs to the corresponding apSl set iff $\loc'$ can be reached from $\loc$ in $j$ time units, and a path achieving this takes $t_1$
for the first time during the interval $(i,i+1)$.

Second, it was also shown in~\cite{AAL24} that given an apSl set~$S$, one can build a formula in the existential fragment of Presburger arithmetic with divisibility (a decidable logic~\cite{LOW15}) $\phi_S$ such that given parameter values $\mathbf{p}$, $S(\mathbf{p})$ is not empty iff $\exists x\in \phi_S(x,\mathbf{p})$ is true.

We can now move to the proof.

\begin{proof}
Let $\PTA = (\ActionSet, \LocSet, \locinit,
	\{x\}, \ParamSet, \invariant, \gEdgeSet)$ be a \gPTA{} and $\locfinal$ be a target location.

We first guess a sequence $e_1,\dots, e_m$ of different edges of~$\EdgeSet$ which contains a location guard (we trivially have that $m\leq |\EdgeSet|$).
These are intuitively the edges with location guards which will be needed, either
by the process reaching~$\locfinal$, or by the processes which will reach the locations used in location guards.
We assume these edges are ordered by the date at which they will be taken for the first time.
In practice, this guess can be achieved by complete enumeration.

For all $i \leq m,$ we set $\loc_i$ to be the location appearing in the guard of edge~$e_i$ and set $\loc_{m+1}=\locfinal$.
In order to reach~$\loc_i$, some edges from $e_1,\dots e_{i-1}$ may be needed.
Let $e_{j^r_1}, \dots, e_{j^r_{m_i}}$ be those edges, in the order they first appear
in the run.
We build the PTA $\PTA_i= (\ActionSet, \LocSet_i, (\locinit,0), (\loc_i, m_i), \{x\}, \ParamSet, \invariant_i, \EdgeSet_i)$ where
\begin{itemize}
\item $\LocSet_i = \big\{(\loc,k)\mid \loc \in \LocSet \wedge k \in \{0,\dots, m_i\} \big\}$,
\item for any $(\loc,k)\in \LocSet'$, $\invariant_i(\loc,k) = \invariant(\loc)$,
\item $\big((\loc,k), \guard, \action, \resets, (\loc',k') \big)\in \EdgeSet'$ iff
there exists $e=(\loc, \guard, \locguard, \action, \resets, \loc')\in \EdgeSet$ such that
one of the following holds
\begin{itemize}
\item $k=k'$ and $\locguard=\top$ or there exists $r \leq k$ such that $e=e_{j_r^i}$,
\item $k'=k+1$ and $e=e_{j^r_{k+1}}$.
\end{itemize}
\end{itemize}
In other words, $\PTA_i$ consists in $m_i+1$ successive copies of~$\PTA$
where the $k$'th~copy blocks every edge with location guard except
$e^i_{r}$ with $r\leq k+1$, and in particular taking $e^i_{k+1}$ leads to the next copy.
This way, a run of~$\PTA_i$ ending in the final location is forced to follow the guessed
structure with respect to the usage of edges with location guards.
It however loses information about when those edges are available.

By applying the previously mentioned result from~\cite{AAL24,Lefaucheux24},
we can build a semilinear set~$T_i$ of $m_i+1$-tuples parametric values
representing the durations of runs going from
$(\locinit,0)$ to~$(\loc_i, m_i)$,
storing the intervals of the first firing of the edges~$e_{j^{r_s}}$.

As previously mentioned, the construction of~$T_i$ does not take into account
the constraints brought by the location guards. Assuming that for $k<i$ location $\loc_k$
is reached at time~$h_i(\mathbf{p})$, we can see that deciding the existence of parameter values $\mathbf{p}$ such that $\loc_i$ is reached in $\PTA$ is equivalent to solving the
formula\footnote{The characters in bold are vectors of variables. In order to avoid
complexifying the formula, we did not indicate the transformation from row to column
vector which is necessary to multiply the two vectors and produce a single term.}
\begin{align*}
\exists \mathbf{p}, \exists \mathbf{d_{1}}, d'_1,\dots \mathbf{d_{m_i+1}}, d'_{m_i+1},
&(\mathbf{d_1}\mathbf{p}+d'_1,\dots, \mathbf{d_{m_i+1}}\mathbf{p}+d'_{m_i+1}) \in T_i, \\
&\bigwedge_{k=1}^{m_i} \mathbf{d_{j^i_k}}\mathbf{p}+d'_{j^i_k} \geq h_{j^i_k}(\mathbf{p})
\end{align*}
Moreover, note that the value $\mathbf{d_{m_i+1}}\mathbf{p}+d'_{m_i+1}$ built here is a possible value for~$h_{i+1}(\mathbf{p})$.

Hence by combining the formulas obtained for each~$i$, removing the comparison to~$h_i$ (which becomes redundant once the variables are shared by every formula), and verifying that the orders of each formula is compatible\footnote{For example, if the path to location $\loc$ goes through a location guard on~$\loc'$, then $\loc$ cannot reciprocally be on a location guard encountered on the way to~$\loc'$.
This condition on order is not directly handled in the formula in the case where both location guards can be reached within the same interval of time.
A~more precise decomposition of time units, would allow including this condition into the formula.}
we have that PR-emptiness is equivalent to the falsity of
\[
\exists \mathbf{p}, \exists \mathbf{d_{1}}, d'_1,\dots \mathbf{d_{m+1}}, d'_{m+1},
\bigwedge_{i=1}^{m+1}
(\mathbf{d_{j^r_1}}\mathbf{p}+d'_{j^r_1},\dots,
\mathbf{d_{j^r_{m_i}}}\mathbf{p}+d'_{j^r_{m_i}}, \mathbf{d_{i}}\mathbf{p}+d'_{i}) \in T_i.
\]
From~\cite{LOW15}, as this formula is expressed in the existential fragment of Presburger arithmetic with divisibility, it is decidable.
\end{proof}

\begin{remark}
Let us quickly discuss the complexity of this algorithm.
The formulas produced by~\cite{AAL24} are at worst doubly exponential.
The modifications we apply to them, combining a polynomial number of those formulas, remains
doubly exponential. We then rely on the decidability of
the existential fragment of Presburger arithmetic with divisibility
which can be solved in \NEXPTIME{}~\cite{LOW15}. The nondeterminism allowed through this last step combines with the nondeterministic guesses of transition sequences without additional cost.
As a consequence, our algorithm lies in \threeNEXPTIME{}.
\end{remark}

\end{document}